\begin{document}

\setlength{\extrarowheight}{4pt} 

\title{Group Knowledge and Individual Introspection\footnote{Many thanks to Hans van Ditmarsch, Klaus Kultti and Hannu Vartiainen for helpful conversations, and to anonymous reviewers from the Workshop on Epistemic Game Theory (Maastricht, July 2022) and from the XIX TARK Conference (Oxford, June 2023) for their helpful comments. An abstract of this paper was published in the Proceedings of the XIX TARK Conference, DOI:\href{http://dx.doi.org/10.4204/EPTCS.379.2}{10.4204/EPTCS.379.2}. Financial support from the Yrj\"{o} Jahnsson Foundation and the OP Group Research Foundation is gratefully acknowledged. All errors are my own.}}
\author{Michele Crescenzi\\ \href{mailto:michele.crescenzi@helsinki.fi}{michele.crescenzi@helsinki.fi}\\
ORCID iD: \href{https://orcid.org/0000-0002-3841-5514}{0000-0002-3841-5514}}
\affil{University of Helsinki and Helsinki Graduate School of Economics\\
Helsinki, Finland}

\newtheorem{proposition}{Proposition}
\newtheorem{theorem}{Theorem}
\newtheorem*{theorem*}{Theorem}
\newtheorem*{definition*}{Definition}
\newtheorem{assumption}{Assumption}
\newtheorem{definition}{Definition}
\newtheorem{lemma}{Lemma}
\newtheorem{claim}{Claim}
\newtheorem{remark}{Remark}
\newtheorem*{remark*}{Remark}
\newtheorem{corollary}{Corollary}

\maketitle

\begin{abstract}
We study distributed knowledge, which is what privately informed agents come to know by communicating freely with one another and sharing everything they know. Knowledge is not necessarily partitional: agents may be boundedly rational and differ in the ability to form higher-order knowledge. We model the inference making process that leads to distributed knowledge, and we do that by introducing revision operators and revision types. Due to the heterogeneity of reasoning abilities, inference making turns out to be order dependent. We show that there are two qualitatively different cases of how distributed knowledge is attained. In the first, distributed knowledge is determined by any group member who can replicate all the inferences that anyone else in the group makes. This finding is in line with the extant literature. In the second case, no member can replicate all the inferences made within the group. As a result, distributed knowledge is determined by any two group members who can jointly replicate what anyone else infers. This case can be interpreted as a form of wisdom of crowd effect and shows that, contrary to what is generally believed, distributed knowledge cannot always be reduced to the reasoning abilities of a single group member.

\bigskip

JEL CLASSIFICATION: C70, D82, D83
\bigskip

KEYWORDS: distributed knowledge, knowledge revision, interactive epistemology, bounded rationality
\end{abstract}

\newpage

\onehalfspacing

\section{Introduction}
\subsection{Background and motivating example}
We research the relationship between group knowledge and individual reasoning abilities. The broad question we are concerned with is: What can a group of privately informed people know? There are two limiting cases. One is common knowledge. A fact is common knowledge when it is public: Everyone in the group knows that fact, everyone knows that everyone knows it, and so on. The other limiting case is distributed knowledge. A fact is distributed knowledge if group members can come to know it by combining what each of them knows. For example, if a group member knows only a fact $p$ and another member knows only that $p$ implies $q$, then knowledge of $q$ is distributed within the group. As \citet[p. 3]{fagin2004reasoning} incisively put it, common knowledge is ``what `any fool' knows''; distributed knowledge is ``what a `wise man'---one who has complete knowledge of what each member of the group knows---would know.'' Taken together, these two notions circumscribe all feasible configurations of group knowledge. A group cannot know less than what is common knowledge, and cannot know more than what is distributed knowledge among its members.

In economics and game theory, it is customary to model group knowledge assuming that information partitions---or, equivalently, equivalence relations---represent individual knowledge. Common and distributed knowledge, then, are represented by the finest common coarsening (meet) and the coarsest common refinement (join) of the individual information partitions, respectively \citep{aumann1976agreeing, tobias2021meet}. When equivalence relations are employed, the transitive closure of the union of the individual relations represents common knowledge, whereas their intersection represents distributed knowledge \citep{fagin2004reasoning}. But to form partitions or equivalence relations presupposes logical omniscience and flawless information processing skills, thus requiring significant, if not exorbitant, cognitive resources \citep{Lipman-survey, rubinstein1998Book}. A less idealized description of individual rationality is offered by non-partitional models, which were introduced in the economics literature by  \cite{Geanakoplos1989NonPart, GeanakoplosNonPartition}. In non-partitional models of knowledge, people may fail to fully process their information. As a result, individual knowledge is not represented by a partition or an equivalence relation, but by a reflexive binary relation that is not necessarily symmetric or transitive. How is group knowledge represented in non-partitional models? As shown by \cite{Geanakoplos1989NonPart}, common knowledge is still represented by the transitive closure of the union of the individual binary relations. By contrast, we are going to argue that the usual representation of distributed knowledge as the intersection of individual relations is not fit for purpose in non-partitional models. Aiming at correcting this shortcoming, our paper provides a new formulation of distributed knowledge that is suitable for non-partitional and partitional models alike.

The following example is adapted from \citet[p. 363]{GeanakoplosNonPartition} and highlights the shortcomings of the standard representation of distributed knowledge. Consider a situation of incomplete information with two possible states of the world, $\omega_1$ and $\omega_2$. There are two persons, Carol ($c$) and Bob ($b$). Carol is fully rational while Bob is not. Their knowledge is represented by the knowledge operators $K_c$ and $K_b$ and, equivalently, by the possibility relations $P_c$ and $P_b$ in Table \ref{tab:MotEx}.

\begin{table}[htb]
	\begin{minipage}[t]{.55\linewidth}
		\centering
		\vspace{0pt}
		\begin{tabular}{c | c | c}
			$A$ (Events) & $K_c(A)$& $K_b(A)$ \\
			\hline
			$\emptyset$ & $\emptyset$& $\emptyset$\\
			$\{\omega_1\}$ & $\emptyset$& $\{\omega_1\}$\\
			$\{\omega_2\}$ & $\emptyset$& $\emptyset$\\
			$\{\omega_1, \omega_2\}$ & $\{\omega_1, \omega_2\}$& $\{\omega_1, \omega_2\}$
		\end{tabular}
	\end{minipage}%
	\begin{minipage}[t]{.4\linewidth}
		\centering
		\vspace{0pt}
		\begin{tabular}{c | c | c}
			$\omega$ (States) & $P_c(\omega)$& $P_b(\omega)$\\
			\hline
			$\omega_1$ & $\{\omega_1, \omega_2\}$& $\{\omega_1\}$\\
			$\omega_2$ & $\{\omega_1, \omega_2\}$& $\{\omega_1, \omega_2\}$
		\end{tabular}
	\end{minipage}
	\caption{Knowledge operators (left) and possibility relations (right)\label{tab:MotEx}}
\end{table}

Knowledge is about events, which are subsets of the set of possible states $\{\omega_1, \omega_2\}$. For each state $\omega$, a person's possibility relation $P$ is a binary relation telling us the states $P(\omega)$ that the person considers possible when $\omega$ occurs. For a given event $A$, a person's knowledge operator $K$ tells us the states $K(A)$ in which the person knows that $A$ holds. A person knows $A$ in $\omega$ if $P(\omega) \subseteq A$. In the example, Carol knows only the event $\{\omega_1, \omega_2\}$, and she knows it in each state. Bob is better informed, since in $\omega_1$ he also knows the event $\{\omega_1\}$. Bob's possibility relation $P_b$ is not an equivalence relation, reflecting his lack of full rationality. Bob is not fully rational---specifically, he is not negatively introspective---because he fails to make the following inference: ``There are two states. In one of them I know that $\{\omega_1\}$ holds, in the other I do not. Therefore, when I do not know whether $\{\omega_1\}$ holds, the true state must be $\omega_2$. This means that I know the event $\{\omega_2\}$ in state $\omega_2$.'' Now, what is distributed knowledge between the fully rational Carol and the boundedly rational Bob? According to the standard representation, distributed knowledge is characterized by the intersection between $P_c$ and $P_b$, thus implying that Carol and Bob cannot come to know $\{\omega_2\}$. But if we take distributed knowledge to be what people come to know by sharing everything they know, knowledge of $\{\omega_2\}$ is arguably distributed in state $\omega_2$. If Bob shares all his knowledge with the fully introspective Carol, she will make the inference that not knowing $\{\omega_1\}$ is equivalent to knowing $\{\omega_2\}$, which is the inference that Bob failed to make on his own. Thus Carol can come to know the prevailing state even in $\omega_2$, and then she can talk back to Bob to share her newly acquired knowledge. In sum, perfect knowledge of the prevailing state is always attainable, and hence distributed, between Carol and Bob. Formally, the appropriate representation of distributed knowledge between our two agents is the identity knowledge operator on the set of all events, which in turn is represented by the possibility relation $P_D$ such that $P_D(\omega) = \{\omega\}$ for each state $\omega$.

\subsection{Research questions and contributions}
The motivating example suggests that one cannot always rely on the standard representation of distributed knowledge as the intersection of possibility relations. This paper provides a new formulation of distributed knowledge that can accommodate non-partitional knowledge as well as heterogeneous reasoning skills. The two research questions we set out to answer are: What is the knowledge operator that describes distributed knowledge in a group where individual knowledge is not necessarily partitional? And what is the possibility relation representing that knowledge operator?

The non-partitional approach to knowledge enables us to incorporate a form of bounded rationality into the standard model of interactive knowledge. Specifically, we can dispense with the axioms of positive and negative introspection just by replacing partitions and equivalence relations with reflexive binary relations. The two introspection axioms are hardwired into partitional models and they both pertain to metacognition. Positive introspection says that, when an agent knows a thing, she knows that she knows it; negative introspection says that, when she does not know a thing, she knows that she does not know it. These axioms are far from being uncontroversial. Economists have criticized negative introspection as too strong \citep{Geanakoplos1989NonPart, shin1993logical}, whereas philosophers, most notably \cite{williamson2002knowledge}, have called both axioms into question (see \cite{Aucher2014} for an overview).

The ability to dispense with the two introspection axioms leads naturally to the study of groups in which, just like in the motivating example, different members may have different reasoning skills. We consider three kinds of agents: fully introspective, positively introspective, and non-introspective. They differ in the ability to form higher-order knowledge. A fully introspective agent satisfies both positive and negative introspection, and her knowledge is represented by an equivalence relation or, equivalently, a partition. In partitional models, everyone is fully introspective. By contrast, a positively introspective agent fails to satisfy negative introspection, and her knowledge is represented by a reflexive and transitive binary relation. Finally, a non-introspective agent fails to satisfy both positive and negative introspection, and her knowledge is represented by a reflexive binary relation.

We define distributed knowledge as what group members come to know by sharing all their private knowledge. Knowledge sharing takes the form of a rational dialogue \textit{\`{a} la} \cite{geanakoplos1982} in which agents share everything they know in a non-strategic and honest manner. Because of possibly different levels of introspection, we have to model how people make inferences during communication. For instance, how would a fully introspective agent revise her knowledge upon learning what a positively introspective agent knows? In the motivating example we argued informally that the fully introspective Carol would always find out the prevailing state of the world. But a precise and formal description of inference making is needed if we want to investigate more complicated cases. Our paper shows that distributed knowledge is affected directly by how group members make inferences and revise their knowledge.

Our first contribution is to develop a model of inference making and knowledge revision. Specifically, we introduce revision operators and revision types to model how differently introspective agents think through the information they learn from others. A revision operator maps any knowledge operator to a revised knowledge operator and captures a specific type of inference; a revision type is formed out of revision operators and tells us the whole set of inferences an agent can draw. We show that revision types are partially ordered according to the precision of the knowledge they induce. Agents with higher revision types infer more detailed knowledge from the same input than do agents with lower types. Differently put, an agent with a higher revision type can replicate all the inferences that an agent with a lower type makes. The set of feasible revision types is a lattice. We show that there exist seven feasible revision types, and not just three---namely, one for each introspection level---as one would expect. Besides, certain types are not comparable: when two types are incomparable, neither one can replicate all the inferences that the other type can conceivably draw. This follows from the order dependence of inference making. As it turns out, revision operators do not necessarily commute with each other and, consequently, what an agent ultimately knows depends not only on her degree of introspection but also on the order in which she makes inferences. It may well be the case that two equally introspective agents process the very same body of knowledge but reach different conclusions just because of the different order in which they reason. When this happens, the two agents have different revision types. Both the order dependence of inference making and the incomparability of certain revision types are effects of non-partitional knowledge; they both vanish when knowledge is partitional.

We develop our model of inference making and knowledge revision with the purpose of explaining how distributed knowledge is attained. But the model also sheds light on how people come to know what they know. As \citet[p. 43]{rubinstein1998Book} aptly remarks, the standard model of interactive knowledge in economics and game theory ``has to be thought of as a reduced form derived from a more complete model, one that captures the decision maker's inference process.'' This paper contains our attempt to develop such a more complete model.

The core contribution of our paper is to offer a new formalization of distribute knowledge. Building on our model of inference making and knowledge revision, we establish that distributed knowledge is represented by the knowledge operator obtained by revising the initial knowledge in the group according to the join (least upper bound) of the agents' revision types. Furthermore, the communication process by which group members attain distributed knowledge is affected by how many maximal revision types exist in the group. Since the set of all feasible revision types is a finite lattice, a maximal type always exists. There are two relevant cases. In the first case, there is only one maximal revision type in the group. This means that there is a member whose revision type is the join of all the types in the group, and such a member can replicate all the inferences that anyone else in the group makes. As a consequence, distributed knowledge can be attained by letting every group member communicate all her knowledge just to the member having the maximal type. This result is perfectly in line with the common view that distributed knowledge is ``just the knowledge of one distinct agent, to whom all the agents communicate their knowledge'' \citep[p. 217]{derHoek-et-alDistributed}. The agent with the highest type can be interpreted as the ``wise man'' we mentioned earlier. In the second case, there are exactly two maximal revision types in the group, so that no member has a revision type equal to the join of all types in the group. This also means that no one in the group can replicate all the inferences that anyone else makes. As a consequence, distributed knowledge can be attained in a two-stage procedure. First, everyone communicates her initial knowledge to the two agents with maximal revision types, and also the two maximal types share their initial knowledge with each other. Then, in the second stage it is sufficient that the two maximal types share with each other the knowledge they revised at the end of the first stage. We show that distributed knowledge coincides with what the more introspective of the two maximal types knows at the end of the second stage. Crucially, the knowledge shared by the two maximal types is different across stages. In the first stage, they share their own initial knowledge; in the second, they share what they came to know after having learned everyone else's knowledge and having thought it through. This result is at odds with the interpretation of distributed knowledge as the knowledge of a distinct agent to whom all other group members communicate their knowledge. Indeed, distributed knowledge turns out to be the combined knowledge of two distinct agents, to whom everyone communicates her initial knowledge. The two group members with maximal revision types act jointly as the ``wise man.''

One can interpret distributed knowledge as wisdom of the crowd. While it is obvious that distributed knowledge is always more detailed than individual knowledge, our model highlights two peculiarities of distributed knowledge which emerge in groups having two maximal revision types. First, a group with just two maximal types attains the same distributed knowledge that would have been attained if their join had been in the group. This means that a group with lower revision types can be as wise as a group with a strictly higher type would be. However, the latter group would attain distributed knowledge more quickly. Second, distributed knowledge can be strictly more detailed than what any group member would come to know just by knowing everyone else's initial knowledge. We give an example of this in Subsection \ref{subsec:Example}. These peculiarities stem from the incomparability of the two maximal revision types. The inferences of both types are incorporated in the distributed knowledge that the group attains through honest and unfettered communication. As a result, distributed knowledge cannot be reduced to the reasoning abilities of a single group member.

We use primarily knowledge operators to formalize inference making and knowledge revision. Since knowledge operators and possibility relations are isomorphic, we also formulate our main results in terms of relations. The possibility relation representing distributed knowledge is always contained in the intersection of the group members' individual possibility relations. The set inclusion can be strict, just like in the motivating example; it holds with equality only in certain cases, among which is the standard case in economics and game theory of fully introspective groups with partitional knowledge.

\subsection{Paper structure}
In Section \ref{Sec:Model} we introduce the basic ingredients of the model, namely knowledge operators and possibility relations. To formalize inference making, we introduce revision operators in Section \ref{sec:RevisionOperators} and study their properties. We put revision operators into use in Section \ref{Sec:KRevision}, where we define revision types and formalize how agents with different levels of introspection revise their knowledge during communication. In the same section, we also introduce a partial order on the set of revision types. We give a full characterization of distributed knowledge in Section \ref{Sec:DK}. Finally, in Section \ref{Sec:Disc} we discuss the related literature and some of the assumptions made in our model.

\section{Knowledge operators and possibility relations}\label{Sec:Model}
The basic components of our model are: a non-empty and finite set of states of the world $\Omega$; a finite set of agents $I = \{1, 2, \dots, N\}$, with $N \geq 2$; knowledge operators (Subsection \ref{subsec:KOperators}); and possibility relations (Subsection \ref{subsec:PRelations}). Knowledge operators and possibility relations represent agents' knowledge about events in $\Omega$. We confine our analysis to the case in which the two representations are equivalent (Subsection \ref{subsec:Duality}). Most of the material in this section is standard and taken from the literature on interactive epistemology---see, among many others, \cite{aumann1999interactive1}, \cite{morris1996logic}, \cite{rubinstein1998Book}, and \cite{Samet-Ignoring}. The only novelty here is the left $n$-ary trace introduced in Subsection \ref{subsec:PRelations}.

\subsection{Knowledge operators}\label{subsec:KOperators}
A knowledge operator $K$ is a function $K: 2^{\Omega} \to 2^{\Omega}$. Subsets of $\Omega$ are called events. An event $A\subseteq \Omega$ holds at a state $\omega\in \Omega$ if $\omega \in A$. If $K_i$ is agent $i$'s knowledge operator, then $K_i(A)$ is the event ``$i$ knows that $A$''. To wit, $K_i(A)$ is the set of states in which $i$ knows that $A$ holds.

A knowledge operator $K$ satisfies:
\begin{itemize}
\item \textit{distributivity} if for all $A,B \subseteq \Omega$,
\begin{equation}\label{eq:K0}
K(A\cap B) = K(A) \cap K(B); \tag{K0}
\end{equation}
\item \textit{necessitation} if
\begin{equation}\label{eq:K0'}
K(\Omega) = \Omega; \tag{K0$'$}
\end{equation}	
\item \textit{veridicality} if for all $A\subseteq \Omega$,
\begin{equation}\label{eq:K1}
K(A)\subseteq A; \tag{K1}
\end{equation}		
\item \textit{positive introspection} if for all $A\subseteq \Omega$,
\begin{equation}\label{eq:K2}
K(A)\subseteq K(K(A)); \tag{K2}
\end{equation}
\item \textit{negative introspection} if for all $A\subseteq \Omega$,
\begin{equation}\label{eq:K3}
\lnot K(A) \subseteq K(\lnot K(A)), \tag{K3}
\end{equation}
where $\lnot$ denotes the set complement operation.
\end{itemize}

The axioms above are the standard \textit{S5} axioms. Recall that negative introspection and veridicality jointly imply positive introspection \citep[p. 270]{aumann1999interactive1}. Besides, distributivity implies monotonicity, which is defined as follows. An operator $K$ is \textit{monotone} if for all $A, B \subseteq \Omega$,
\begin{equation}\label{eq:Monot}
A \subseteq B \implies K(A) \subseteq K(B).
\end{equation}

Call $\mathcal{K}$ the set of all knowledge operators on $\Omega$. We write $\mathcal{K}(\Omega)$ when we need to specify the underlying state space. The following subsets of $\mathcal{K}$ are introduced for later use:
\begin{itemize}
\item $\mathcal{K}^v$ is the set of all knowledge operators that satisfy \eqref{eq:K1};
	\item $\mathcal{K}^0$ is the set of all knowledge operators that satisfy \eqref{eq:K0} and \eqref{eq:K0'};
	\item $\mathcal{K}^1$ is the set of all knowledge operators that satisfy \eqref{eq:K0}, \eqref{eq:K0'}, and \eqref{eq:K1};
	\item $\mathcal{K}^2$ is the set of all knowledge operators that satisfy \eqref{eq:K0}, \eqref{eq:K0'}, \eqref{eq:K1}, and \eqref{eq:K2};
	\item $\mathcal{K}^3$ is the set of all knowledge operators that satisfy \eqref{eq:K0}, \eqref{eq:K0'}, \eqref{eq:K1}, \eqref{eq:K2}, and \eqref{eq:K3}.
\end{itemize}
It is clear that $\mathcal{K}^3 \subseteq \mathcal{K}^2 \subseteq \mathcal{K}^1 \subseteq \mathcal{K}^0 \subseteq \mathcal{K}$. 

\subsection{Possibility relations}\label{subsec:PRelations}
A possibility relation $P$ is a binary relation over $\Omega$, i.e., $P \subseteq \Omega \times \Omega$. If $(\omega, \omega')$ is in agent $i$'s possibility relation, then $i$ considers $\omega'$ possible when $\omega$ occurs. It is often convenient to describe a possibility relation with its lower contour sets. Formally, the lower contour set of $P$ at $\omega$ is the set $P(\omega):= \left\lbrace \omega' \in \Omega: (\omega, \omega')\in P\right\rbrace$. In words, $P(\omega)$ is the set of states that, according to $P$, are considered possible when $\omega$ occurs.

A possibility relation $P$ is:
\begin{itemize}
\item \textit{reflexive} if for all $\omega \in \Omega$,
\begin{equation*}
(\omega, \omega) \in P;
\end{equation*}
\item \textit{transitive} if for all $\omega, \omega', \omega'' \in \Omega$,
\begin{equation*}
(\omega, \omega') \in P \text{ and } (\omega', \omega'') \in P \implies (\omega, \omega'') \in P;
\end{equation*}
\item \textit{Euclidean} if for all $\omega, \omega', \omega'' \in \Omega$,
\begin{equation*}
(\omega, \omega') \in P \text{ and } (\omega, \omega'') \in P \implies (\omega', \omega'') \in P;
\end{equation*}
\item \textit{symmetric} if for all $\omega, \omega' \in \Omega$,
\begin{equation*}
(\omega, \omega') \in P \implies (\omega', \omega) \in P.
\end{equation*}
\end{itemize}

Call $\mathcal{P}$ the set of all possibility relations on $\Omega$. The following subsets of $\mathcal{P}$ will be used later on:
\begin{itemize}
\item $\mathcal{P}^1$ is the set of all \textit{reflexive} relations on $\Omega$;
\item $\mathcal{P}^2$ is the set of all \textit{reflexive} and \textit{transitive} relations on $\Omega$;
\item $\mathcal{P}^3$ is the set of all \textit{reflexive}, \textit{transitive} and \textit{Euclidean} relations on $\Omega$.
\end{itemize}

It is clear that $\mathcal{P}^3 \subseteq \mathcal{P}^2 \subseteq \mathcal{P}^1 \subseteq \mathcal{P}$. It is also easy to check that $P \in \mathcal{P}^3$ if and only if $P$ is an equivalence relation, i.e., a reflexive, symmetric and transitive relation. A similar taxonomy of possibility relations with a discussion of their properties can be found in \cite{GeanakoplosNonPartition}.

Some of the results in Section \ref{Sec:KRevision} involve two particular binary relations: the left trace and the left $n$-ary trace. Following \citet[p. 69]{aleskerov2007utility}, the \textit{left trace} of $P \in \mathcal{P}$ is the binary relation $T_P$ constructed as follows:
\begin{equation*}
\text{For all } \omega, \omega' \in \Omega, \quad (\omega, \omega') \in T_P \iff P(\omega') \subseteq P(\omega).
\end{equation*}

The left trace $T_P$ is reflexive and transitive. The symmetric part of $T_P$ is the binary relation $E_P$ obtained as: 
\begin{equation*}
\text{For all } \omega, \omega' \in \Omega, \quad (\omega, \omega') \in E_P \iff (\omega, \omega') \in T_P \text{ and } (\omega', \omega) \in T_P.
\end{equation*}

Notice that $(\omega, \omega') \in E_P$ if and only if $P(\omega') = P(\omega)$. 

The left $n$-ary trace is a generalization of the left trace. To the best of our knowledge, this is the first paper to introduce the left $n$-ary trace. For a given $n$-tuple $\mathbf{P} = (P_1, \dots, P_n)$ of relations in $\mathcal{P}$, 
the \textit{left $n$-ary trace} of $\mathbf{P}$ is the binary relation $T_{\mathbf{P}}$ constructed as follows:
\begin{align*}
\text{For all } \omega, \omega' \in \Omega, \quad (\omega, \omega') \in T_{\mathbf{P}} \iff \; &\text{ for all } i\in \{1,\dots, n\} \; \text{ there exists a } \; j\in \{1,\dots, n\} \\
&\text{ such that } \;  P_j(\omega') \subseteq P_i(\omega).
\end{align*}

The left $n$-ary trace $T_{\mathbf{P}}$ is reflexive and transitive. Clearly, $n=1$ and $T_\mathbf{P} = T_P$ when $\mathbf{P} = (P)$. The symmetric part of $T_{\mathbf{P}}$ is denoted $E_{\mathbf{P}}$ and defined in the obvious way:
\begin{equation*}
\text{For all } \omega, \omega' \in \Omega, \quad (\omega, \omega') \in E_{\mathbf{P}} \iff (\omega,\omega') \in T_{\mathbf{P}} \; \text{ and } \;  (\omega',\omega) \in T_{\mathbf{P}}.
\end{equation*}

In Appendix \ref{MinimalSets} we provide a characterization that will prove useful for computing $E_{\mathbf{P}}$. To illustrate the definitions just given, Table \ref{tab:Traces} contains an example with two possibility relations $P_1$ and $P_2$ on $\Omega=\{\omega_1, \dots, \omega_4\}$, the corresponding left traces $T_{P_1}$ and $T_{P_2}$, the left binary trace $T_{\mathbf{P}}$, where $\mathbf{P} = (P_1, P_2)$, and its symmetric part $E_{\mathbf{P}}$.

\begin{table}[h]
\centering
\begin{tabular}{ c | c | c | c | c | c | c}
$\omega$ &  $P_1(\omega)$  &  $P_2(\omega)$ &  $T_{P_1}(\omega)$ &  $T_{P_2}(\omega)$ &  $T_{\mathbf{P}}(\omega)$ &  $E_{\mathbf{P}}(\omega)$\\
  \hline			
  $\omega_1$ & $\{\omega_1,\omega_2\}$ & $\{\omega_1, \omega_2\}$ & $\{\omega_1\}$ & $\{\omega_1, \omega_2\}$ & $\{\omega_1, \omega_2\}$ & $\{\omega_1\}$\\
  $\omega_2$ & $\{\omega_2, \omega_3\}$ & $\{\omega_2\}$ & $\{\omega_2\}$ & $\{\omega_2\}$ & $\{\omega_2\}$ & $\{\omega_2\}$\\
  $\omega_3$ & $\{\omega_2, \omega_3, \omega_4\}$ & $\{\omega_2, \omega_3, \omega_4\}$ & $\{\omega_2, \omega_3, \omega_4\}$& $\{\omega_2, \omega_3\}$ & $\{\omega_2, \omega_3, \omega_4\}$ & $\{\omega_3\}$\\
  $\omega_4$ & $\{\omega_4\}$ & $\{\omega_1, \omega_4\}$ & $\{\omega_4\}$& $\{\omega_4\}$& $\{\omega_4\}$ & $\{\omega_4\}$
\end{tabular}
\caption{The left traces and the left binary trace of $P_1$ and $P_2$.}
\label{tab:Traces}
\end{table}

\subsection{The duality between knowledge operators and possibility relations}\label{subsec:Duality}

Every possibility relation represents a knowledge operator, but only knowledge operators satisfying both distributivity \eqref{eq:K0} and necessitation \eqref{eq:K0'} can be represented by a possibility relation---see \cite{morris1996logic}. Throughout this paper we confine ourselves to knowledge operators that are representable by a possibility relation.

Given a possibility relation $P \in \mathcal{P}$, the unique knowledge operator represented by $P$ is 
\begin{equation}\label{eq:ReprKbyP}
K(A) = \left\lbrace \omega \in \Omega:  P(\omega) \subseteq A\right\rbrace
\end{equation}
for all events $A \subseteq \Omega$. Conversely, given a knowledge operator $K \in \mathcal{K}^0$, the unique possibility relation that represents $K$ is  
\begin{equation}\label{eq:ReprMSZ}
	P(\omega) := \bigcap \left\lbrace A \subseteq \Omega : \omega \in K(A)\right\rbrace
\end{equation}
for all $\omega \in \Omega$.

In light of \eqref{eq:ReprKbyP} and \eqref{eq:ReprMSZ}, knowledge operators and possibility relations are interchangeable provided that the former satisfy distributivity and necessitation. Formally, there is a dual isomorphism\footnote{The terminology is borrowed from \cite{monjardet2007dual}.} between $\mathcal{P}$ and $\mathcal{K}^0$. To establish this, notice that $\mathcal{P}$ is partially ordered by set inclusion. The set $\mathcal{K}$ is partially ordered by pointwise set inclusion and so is its subset $\mathcal{K}^0$. Specifically, for any $K, K'\in \mathcal{K}$,
\begin{equation*}
K \leq K' \iff K(A) \subseteq K'(A) \text{ for all } A\subseteq \Omega.
\end{equation*}

In words, when $K \leq K'$ a person with knowledge $K'$ knows at least as much as someone with knowledge $K$ does. We also say that $K'$ is an expansion of $K$ or that $K'$ is more detailed than $K$.

\begin{proposition}\label{Prop:Duality}
Let $f: \mathcal{P} \to \mathcal{K}^0$ be the function that associates to each relation $P$ the knowledge operator $f(P)$ formed according to \eqref{eq:ReprKbyP}. Then the function $f$ is a dual isomorphism. That is, $f$ is a bijection such that
\begin{equation*}
P \subseteq P' \iff f(P') \leq f(P)
\end{equation*}
for all $P, P' \in \mathcal{P}$.
\end{proposition}
\begin{proof}
See Appendix \ref{Proof:Duality}.
\end{proof}

The dual isomorphism between knowledge operators and possibility relations entails that each property of the former corresponds to a specific property of the latter, and vice versa.

\begin{proposition}[\cite{morris1996logic}] Let $K$ be a knowledge operator in $\mathcal{K}^0$ and let $P \in \mathcal{P}$ be the possibility relation that represents $K$. Then the following hold.
\begin{itemize}
\item[(i)] $K \in \mathcal{K}^1$ if and only if $P \in \mathcal{P}^1$.
\item[(ii)] $K \in \mathcal{K}^2$ if and only if $P \in \mathcal{P}^2$.
\item[(iii)] $K \in \mathcal{K}^3$ if and only if $P \in \mathcal{P}^3$.
\end{itemize}
\end{proposition}

\section{Revision operators}\label{sec:RevisionOperators}
The notion of distributed knowledge we adopt presupposes a process of communication and inference making. This section introduces revision operators in order to capture inference making. A revision operator is a map $(.)^{\mathbf{r}}:\mathcal{K}^v \to \mathcal{K}^v$ that assigns a new knowledge operator $K^{\mathbf{r}}$ to any given knowledge operator $K$. The operator $K^{\mathbf{r}}$ is a revision of $K$ based on a specific type of inference. We define three revision operators. The first describes inferences based on positive introspection \eqref{eq:K2}. The second operator is based on both positive \eqref{eq:K2} and negative introspection \eqref{eq:K3}, and the last one captures distributivity \eqref{eq:K0}.

A revision operator describes just a specific type of inference that can be drawn in knowledge revision. An agent may draw different inferences described by different revision operators. The exact set of inferences made by the agent is represented by her revision type. Revision types are introduced in Section \ref{Sec:KRevision}.

A few definitions before getting to revision operators. Given a knowledge operator $K \in \mathcal{K}$, the image of $K$ is
\begin{equation*}
\mathsf{Img}(K):= \left\lbrace K(A): A \subseteq \Omega\right\rbrace,
\end{equation*}
and the set of complements of the elements of $\mathsf{Img}(K)$ is
\begin{equation*}
\mathsf{Img}(\lnot K):= \left\lbrace \lnot K(A): A \subseteq \Omega\right\rbrace.
\end{equation*}
Finally, the set of fixed points of $K$ is
\begin{equation*}
	\mathsf{Fix}(K):= \left\lbrace A\subseteq \Omega: A = K(A) \right\rbrace.
\end{equation*}

\subsection{Positive Introspection}
The positive introspection operator represents inferences based on the positive introspection axiom \eqref{eq:K2}. Formally, the \textit{positive introspection operator} is a function $(.)^{\bm{+}}: \mathcal{K}^v \to \mathcal{K}^v$ that maps any $K$ to the revised knowledge operator $K^{\bm{+}}$ constructed as follows. For all $A \subseteq \Omega$,
\begin{equation*}
K^{\bm{+}} (A) = 
\begin{cases}
A & \text{ if } A \in \mathsf{Img}(K)\\
K(A) & \text{ if } A \notin \mathsf{Img}(K).
\end{cases}
\end{equation*}

In words, if $A = K(B)$ for some event $B$, then knowledge of $B$ is equivalent to $A$. Thus, a positively introspective agent realizes that her knowing $B$ is equivalent to knowing $A$. As a result, her revised knowledge about $A$ is $K^{\bm{+}}(A) = A$. If $A \neq K(B)$ for all $B \subseteq \Omega$, then there is no event $K(B)$ equivalent to $A$. Hence the agent's knowledge of $A$ is left unchanged and $K^{\bm{+}}(A) = K(A)$.

As an example of the reasoning behind the positive introspection operator, consider an agent whose knowledge at a certain point in time is described by $K$ in Table \ref{tab:PosInt}. The state space is $\Omega=\{\omega_1, \omega_2, \omega_3\}$. If the agent were positively introspective, she would pause a moment to reflect upon her knowledge and reason along the following lines. ``I never know when $\{\omega_2\}$ holds. But I know that $\{\omega_1, \omega_2\}$ holds at $\omega_2$, and only at $\omega_2$. Therefore, when I am at a state in which I know that $\{\omega_1, \omega_2\}$ holds, I can conclude that the state must be $\omega_2$. Thus I know when $\{\omega_2\}$ holds.'' A positively introspective agent could follow such a line of reasoning either before or after the true state of the world has occurred. In the former case, the agent ponders what she would know at every possible state of the world, so thinking about her future knowledge. In the other case, the agent reflects both on her factual and counterfactual knowledge. Assuming $\omega_2$ has occurred, the realization that $K^{\bm{+}}(\{\omega_2\}) = \{\omega_2\}$ is a piece of factual knowledge. The realization that $K^{\bm{+}}(\{\omega_3\}) = \{\omega_3\}$ refers to what the agent would have known if $\omega_3$ had occurred.

\begin{table}[h]
\centering
\begin{tabular}{ c | c | c | c}
$A$ &  $K(A)$ & $K^{\bm{+}} (A)$ & $K^{\bm{\pm}}(A)$\\
\hline			
$\emptyset$ & $\emptyset$ & $\emptyset$ & $\emptyset$  \\
$\{\omega_1\}$ & $\emptyset$ & $\emptyset$ & $\emptyset$\\
$\{\omega_2\}$ & $\emptyset$ & $\{\omega_2\}$ & $\{\omega_2\}$\\
$\{\omega_3\}$ & $\emptyset$ & $\{\omega_3\}$ & $\{\omega_3\}$\\
$\{\omega_1,\omega_2\}$ & $\{\omega_2\}$ & $\{\omega_2\}$ & $\{\omega_1,\omega_2\}$\\
$\{\omega_1,\omega_3\}$ & $\emptyset$ & $\emptyset$ & $\{\omega_1,\omega_3\}$\\
$\{\omega_2,\omega_3\}$ & $\{\omega_3\}$ & $\{\omega_3\}$ & $\{\omega_3\}$\\
$\{\omega_1,\omega_2,\omega_3\}$ & $\{\omega_1,\omega_2,\omega_3\}$ & $\{\omega_1,\omega_2,\omega_3\}$ & $\{\omega_1,\omega_2,\omega_3\}$
\end{tabular}
\caption{A knowledge operator $K$ and the corresponding revised operators $K^{\bm{+}}$ and $K^{\bm{\pm}}$.}
\label{tab:PosInt}
\end{table}

The following proposition lists the main properties of the positive introspection operator.

\begin{proposition}\label{Prop:PosIntrProperties}
For all $K \in \mathcal{K}^v$, the following are true.
\begin{itemize} 
\item[(i)] $K^{\bm{+}}$ satisfies veridicality \eqref{eq:K1} and positive introspection \eqref{eq:K2}.
\item[(ii)] $K^{\bm{+}} = K$ if and only if $K$ satisfies positive introspection. 
\item[(iii)] $K\leq K^{\bm{+}}$, i.e., the positive introspection operator $(.)^{\bm{+}}$ is extensive.
\item[(iv)] $K^{\bm{++}} = K^{\bm{+}}$, i.e., the positive introspection operator $(.)^{\bm{+}}$ is idempotent.
\end{itemize}
\end{proposition}

\begin{proof}
See Appendix \ref{Proof:PosIntrProperties}.
\end{proof}

The extensivity of $(.)^{\bm{+}}$ says that $K^{\bm{+}}$ is an expansion of $K$, i.e., no piece of knowledge is lost or forgotten when changing $K$ into $K^{\bm{+}}$. Idempotence says that $K^{\bm{+}}$ cannot be further expanded through inferences rooted in the positive introspection axiom. One can check that the positive introspection operator $(.)^{\bm{+}}$ is not a closure operator in that it fails to be monotone,\footnote{The monotonicity mentioned here should not be confused with the monotonicity in \eqref{eq:Monot}. The former refers to an operator that maps knowledge operators to knowledge operators. The latter refers to a single knowledge operator that maps events to events.} i.e., it is not the case that $K \leq J$ implies $K^{\bm{+}} \leq J^{\bm{+}}$ for all $K, J \in \mathcal{K}^v$.

Finally, we can also interpret the positive introspection operator in terms of self-evident events. Assuming veridicality, self-evident events are nothing other than the fixed points of the knowledge operator in hand. One can check that $\mathsf{Img}(K) = \mathsf{Fix}(K^{\bm{+}})$, and that $K^{\bm{+}}$ is the least expansion of $K$ that turns all the elements of $\mathsf{Img}(K)$ into self-evident events.

\subsection{Full Introspection}
The full introspection operator describes inferences rooted in negative introspection \eqref{eq:K3} and, consequently, positive introspection \eqref{eq:K2}. Formally, the \textit{full introspection operator} is a function $(.)^{\bm{\pm}}: \mathcal{K}^v \to \mathcal{K}^v$ that maps any $K$ to the revised knowledge operator $K^{\bm{\pm}}$ constructed as follows. For all $A \subseteq \Omega$,
\begin{equation*}
K^{\bm{\pm}} (A) = 
\begin{cases}
A & \text{ if } A \in \mathsf{Img}(K)\cup \mathsf{Img}(\lnot K)\\
K(A) & \text{otherwise}.
\end{cases}
\end{equation*}

It is clear that $(.)^{\bm{\pm}}$ subsumes $(.)^{\bm{+}}$. The difference between the two is that the full introspection operator also takes the following into account. If $A = \lnot K(B)$ for some event $B$, then lack of knowledge of $B$ is equivalent to $A$. Thus, a fully introspective agent realizes that her not knowing  $B$ is equivalent to knowing $A$. Hence, her revised knowledge about $A$ is $K^{\bm{\pm}} (A) = A$.

As an example, consider again an agent with knowledge $K$ in Table \ref{tab:PosInt}. If the agent were fully introspective, she would think through her knowledge as follows. ``I know that $\{\omega_1, \omega_2\}$ holds at $\omega_2$, and only at $\omega_2$. This means that I do not know $\{\omega_1, \omega_2\}$ when either $\omega_1$ or $\omega_3$ occurs. Therefore, when I am at a state in which I do not know $\{\omega_1, \omega_2\}$, I can conclude that the state must be either $\omega_1$ or $\omega_3$. So I know when $\{\omega_1, \omega_3\}$ occurs.''

The following proposition lists the main properties of the full introspection operator.

\begin{proposition}\label{Prop:FullIntroProperties}
For all $K \in \mathcal{K}^v$, the following are true.
\begin{itemize}
\item[(i)] $K^{\bm{\pm}}$ satisfies veridicality \eqref{eq:K1}, necessitation \eqref{eq:K0'}, positive introspection \eqref{eq:K2}, and negative introspection \eqref{eq:K3}.

\item[(ii)] $K^{\bm{\pm}} = K$ if and only if $K$ satisfies negative introspection. 

\item[(iii)] $K\leq K^{\bm{\pm}}$, i.e., the full introspection operator $(.)^{\bm{\pm}}$ is extensive.
\item[(iv)] $K^{\bm{\pm \pm}} = K^{\bm{\pm}}$, i.e., the full introspection operator $(.)^{\bm{\pm}}$ is idempotent.
\end{itemize}
\end{proposition}

\begin{proof}
See Appendix \ref{Proof:FullIntrProperties}.
\end{proof}

Just like $(.)^{\bm{+}}$, the full introspection operator $(.)^{\bm{\pm}}$ is not a closure operator in that it fails to be monotone. In terms of self-evident events, it is easy to check that $\mathsf{Img}(K) \cup \mathsf{Img}(\lnot K) = \mathsf{Fix}(K^{\bm{\pm}})$, and that $K^{\bm{\pm}}$ is the least expansion of $K$ that turns all the elements of $\mathsf{Img}(K) \cup \mathsf{Img}(\lnot K)$ into self-evident events.

\subsection{Distributive closure}
The distributive closure represents inferences based on the distributivity axiom \eqref{eq:K0}. Formally, the \textit{distributive closure} is a function $(.)^{\mathbf{d}}: \mathcal{K}^v \to \mathcal{K}^v$ that maps any $K$ to the revised knowledge operator $K^{\mathbf{d}}$ constructed as follows. For all $A \subseteq \Omega$,
\begin{equation*}
K^{\mathbf{d}} (A) = \bigcup \left\lbrace \cap_{i=1}^n K(B_i): \; B_1, \dots, B_n \subseteq \Omega, \; \cap_{i=1}^n B_i \subseteq A, \; n \geq 1 \right\rbrace.
\end{equation*}

In words, if $A$ is jointly implied by a sequence of events $B_1, \dots, B_n$, then simultaneous knowledge of all these $n$ events implies knowledge of $A$.

As an example of the reasoning behind the distributive closure, consider a case with five states of the world $\omega_1, \dots, \omega_5$. Suppose an agent's knowledge is represented by $K$ in Table \ref{tab:DistrClos}. The agent always knows the state space but never knows any other event that is not reported in the first column of the table. If her knowledge satisfied distributivity, the agent would be able to draw the following inferences. ``I never know the event $\{\omega_2, \omega_5\}$. But this event holds if and only if the three events $\{\omega_1,\omega_2, \omega_3, \omega_5\}$, $\{\omega_2,\omega_3, \omega_4, \omega_5\}$ and $\{\omega_1,\omega_2, \omega_4, \omega_5\}$ simultaneously hold. Now, when $\omega_2$ occurs, and only in that case, I know that all the three events above hold. Therefore, I can conclude that also $\{\omega_2, \omega_5\}$ holds at $\omega_2$. And if I know that $\{\omega_2, \omega_5\}$ holds, I can infer that every superset of it holds as well.'' 

\begin{table}[h]
\centering
\begin{tabular}{ c | c | c}
$A$ &  $K(A)$ &  $K^{\mathbf{d}}(A)$\\
\hline			
$\{\omega_2, \omega_5\}$ & $\emptyset$ & $\{\omega_2\}$\\
$\{\omega_1,\omega_2, \omega_3, \omega_5\}$ & $\{\omega_1,\omega_2, \omega_3\}$ & $\{\omega_1,\omega_2, \omega_3\}$\\
$\{\omega_2,\omega_3, \omega_4, \omega_5\}$ & $\{\omega_2,\omega_3, \omega_4\}$ & $\{\omega_2,\omega_3, \omega_4\}$\\
$\{\omega_1,\omega_2, \omega_4, \omega_5\}$ & $\{\omega_1,\omega_2, \omega_4\}$ & $\{\omega_1,\omega_2, \omega_4\}$\\
$\dots$ & $\dots$ & $\dots$
\end{tabular}
\caption{A portion of a knowledge operator $K$ and of its distributive closure $K^{\mathbf{d}}$.}
\label{tab:DistrClos}
\end{table}

\newpage

The following proposition lists the main properties of the distributive closure.

\begin{proposition}\label{Prop:DistrClosProperties}
For all $K, J \in \mathcal{K}^v$, the following are true.
\begin{itemize}
\item[(i)] $K^{\mathbf{d}}$ satisfies veridicality \eqref{eq:K1} and distributivity \eqref{eq:K0}.

\item[(ii)] $K^{\mathbf{d}} = K$ if and only if $K$ satisfies distributivity. 

\item[(iii)] $K\leq K^{\mathbf{d}}$, i.e., the distributive closure $(.)^{\mathbf{d}}$ is extensive.
\item[(iv)] $K^{\mathbf{dd}} = K^{\mathbf{d}}$, i.e., the distributive closure $(.)^{\mathbf{d}}$ is idempotent.
\item[(v)] $K \leq J$ implies $K^{\mathbf{d}} \leq J^{\mathbf{d}}$, i.e., the distributive closure $(.)^{\mathbf{d}}$ is monotone.
\end{itemize}
\end{proposition}

\begin{proof}
See Appendix \ref{Proof:DistrClosProperties}.
\end{proof}

Parts (iii)-(v) above say that the distributive closure $(.)^{\mathbf{d}}$ is a closure operator. As such, $K^{\mathbf{d}}$ is the least expansion of $K$ satisfying distributivity.

Finally, the lemma below shows that the distributive closure preserves both positive and negative introspection. On the contrary, it is easy to check that neither of the two introspection operators preserves distributivity. This difference between the distributive closure and the two introspection operators plays an important role in defining revision types in the next section.

\begin{lemma}\label{Lemma:PreservedProperties}
	Let $K \in \mathcal{K}^v$.
	\begin{itemize}
		\item[(i)] If $K$ satisfies positive introspection, so does $K^{\mathbf{d}}$.
		\item[(ii)] If $K$ satisfies negative introspection, so does $K^{\mathbf{d}}$.
	\end{itemize} 
\end{lemma}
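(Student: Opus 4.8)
The plan is to prove both parts directly from the definition of the distributive closure, using the fact (Proposition \ref{Prop:DistrClosProperties}) that $K^{\mathbf{d}}$ already satisfies distributivity and veridicality, together with the characterization $K^{\mathbf{d}}(A) = \bigcup\{\cap_{i=1}^n K(B_i) : \cap_{i=1}^n B_i \subseteq A,\ n\geq 1\}$. A cleaner route, which I would try first, is to observe that $\mathsf{Img}(K^{\mathbf{d}})$ is tightly related to $\mathsf{Img}(K)$: every value $K^{\mathbf{d}}(A)$ is a union of finite intersections of values of $K$, and conversely, by distributivity of $K^{\mathbf{d}}$ and extensivity ($K \leq K^{\mathbf{d}}$), each $K(B)$ sits inside $K^{\mathbf{d}}(B)$. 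The key algebraic identity I want is that a finite intersection $\cap_i K(B_i)$ equals $K^{\mathbf{d}}(\cap_i B_i)$ whenever $K^{\mathbf{d}}$ is distributive — indeed $K^{\mathbf{d}}(\cap_i B_i) = \cap_i K^{\mathbf{d}}(B_i) \supseteq \cap_i K(B_i)$, and the reverse inclusion is immediate from the defining union. Hence every generator $\cap_i K(B_i)$ of $K^{\mathbf{d}}(A)$ is itself of the form $K^{\mathbf{d}}(C)$ for $C = \cap_i B_i \subseteq A$, so $K^{\mathbf{d}}(A)$ is a union of sets in $\mathsf{Img}(K^{\mathbf{d}})$; combined with monotonicity this gives $K^{\mathbf{d}}(A) = \bigcup\{K^{\mathbf{d}}(C) : C \subseteq A\}$, a useful normal form.

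For part (i), suppose $K$ satisfies positive introspection, i.e. $K(A) \subseteq K(K(A))$ for all $A$. I want $K^{\mathbf{d}}(A) \subseteq K^{\mathbf{d}}(K^{\mathbf{d}}(A))$. Take an arbitrary generator $G = \cap_{i=1}^n K(B_i)$ with $\cap_i B_i \subseteq A$; it suffices to show $G \subseteq K^{\mathbf{d}}(K^{\mathbf{d}}(A))$. By positive introspection of $K$, $K(B_i) \subseteq K(K(B_i))$, so $G \subseteq \cap_i K(K(B_i))$, which is again a generator of $K^{\mathbf{d}}(D)$ where $D = \cap_i K(B_i) = G$. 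Thus $G \subseteq K^{\mathbf{d}}(G)$. Now $G \subseteq K^{\mathbf{d}}(A)$ (it is one of the sets in the defining union), so by monotonicity of $K^{\mathbf{d}}$ we get $K^{\mathbf{d}}(G) \subseteq K^{\mathbf{d}}(K^{\mathbf{d}}(A))$, hence $G \subseteq K^{\mathbf{d}}(K^{\mathbf{d}}(A))$. Taking the union over all generators yields $K^{\mathbf{d}}(A) \subseteq K^{\mathbf{d}}(K^{\mathbf{d}}(A))$, as desired.

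For part (ii), suppose $K$ satisfies negative introspection, $\lnot K(A) \subseteq K(\lnot K(A))$; since negative introspection implies positive introspection, part (i) already gives positive introspection of $K^{\mathbf{d}}$, so by the duality/characterization of \textbf{S5} operators it would be enough to show $K^{\mathbf{d}}$ is negatively introspective directly. I would argue: fix $A$ and set $C = K^{\mathbf{d}}(A)$. Using the normal form above, $C = \cup\{K^{\mathbf{d}}(D): D\subseteq A\}$, but the crux is to show $\lnot C \in \mathsf{Img}(K^{\mathbf{d}})$, i.e. that $\lnot C = K^{\mathbf{d}}(\lnot C)$, which by veridicality reduces to $\lnot C \subseteq K^{\mathbf{d}}(\lnot C)$. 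Here I expect to pass through the possibility-relation picture: negative introspection of $K$ means its representing relation is Euclidean (by the Proposition following Proposition \ref{Prop:Duality}), and I would show that the relation representing $K^{\mathbf{d}}$ refines the one for $K$ while remaining reflexive, transitive, and Euclidean — equivalently, that $K^{\mathbf{d}}$'s possibility relation is still an equivalence relation. The main obstacle is precisely this step: controlling the representing relation of $K^{\mathbf{d}}$, since the distributive closure is defined combinatorially on events, not on states, so I will need the formula \eqref{eq:ReprMSZ} (or \eqref{eq:ReprMorris}) to translate between the two and verify that taking finite intersections and unions of images of $K$ does not destroy the Euclidean property. If the relational route proves awkward, the fallback is the purely operator-theoretic computation: show $\lnot K^{\mathbf{d}}(A)$ is a union of generators of the right form by writing $\lnot K^{\mathbf{d}}(A) = \cap\{\lnot K^{\mathbf{d}}(D) : D \subseteq A\}$ and using that each $\lnot K^{\mathbf{d}}(D) = \lnot K(D')$-type set is a fixed point because $K$ is negatively introspective, then invoking that $\mathsf{Img}(\lnot K) \subseteq \mathsf{Fix}(K) \subseteq \mathsf{Fix}(K^{\mathbf{d}})$.
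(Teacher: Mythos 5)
Your part (i) is correct and is essentially the paper's own argument: a generator $\cap_{i} K(B_i)$ of $K^{\mathbf{d}}(A)$ is contained in $\cap_{i} K(K(B_i))$, which is in turn a generator of $K^{\mathbf{d}}$ applied to any superset of $\cap_{i} K(B_i)$, in particular to $K^{\mathbf{d}}(A)$. One caveat on your preamble: the ``key algebraic identity'' $\cap_{i} K(B_i) = K^{\mathbf{d}}(\cap_{i} B_i)$ is false in general (already for $n=1$ it would force $K = K^{\mathbf{d}}$); fortunately nothing in your part (i) actually relies on it.

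Part (ii) has a genuine gap. Your primary route through possibility relations is unavailable at this level of generality: the lemma assumes only $K \in \mathcal{K}^v$, so neither $K$ nor $K^{\mathbf{d}}$ need satisfy necessitation or, in the case of $K$, distributivity, and hence neither need be representable by a possibility relation at all (the duality of Proposition \ref{Prop:Duality} lives on $\mathcal{K}^0$). You yourself flag ``controlling the representing relation of $K^{\mathbf{d}}$'' as the main obstacle, but you do not resolve it. Your operator-theoretic fallback also fails: the claim that each $\lnot K^{\mathbf{d}}(D)$ is ``an $\lnot K(D')$-type set'' is unjustified and generally false, since $\lnot K^{\mathbf{d}}(D)$ is an intersection of unions of sets of the form $\lnot K(B_i)$, not a single such set, so you cannot simply invoke $\mathsf{Img}(\lnot K) \subseteq \mathsf{Fix}(K)$. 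The missing idea is combinatorial: by De Morgan, $\lnot K^{\mathbf{d}}(A)$ is an intersection of sets of the form $\cup_{i}\lnot K(B_i)$; finiteness of $\Omega$ makes this a \emph{finite} intersection of finite unions; distributing intersection over union rewrites $\lnot K^{\mathbf{d}}(A)$ as a finite union of blocks $\cap_{i}\lnot K(C_i)$. Given $\omega \in \lnot K^{\mathbf{d}}(A)$, pick a block containing $\omega$; negative introspection together with veridicality turns it into $\cap_{i} K(\lnot K(C_i))$, and since the block is contained in $\lnot K^{\mathbf{d}}(A)$ it is a generator of $K^{\mathbf{d}}(\lnot K^{\mathbf{d}}(A))$, which is exactly what is needed.
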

\begin{proof}
	See Appendix \ref{Proof:PreservedProperties}.
\end{proof}

\section{Knowledge revision}\label{Sec:KRevision}
This section formalizes how agents revise their knowledge upon learning what others know. Consider an agent $i$ whose knowledge at a certain point in time is represented by $K_i$. Suppose $n-1$ other agents communicate their knowledge operators $K_1, \dots, K_{i - 1}, K_{i + 1}, \dots, K_n$ to $i$, as they are assumed to do in the next section. The question we address here is: How does $i$ think through the profile $\mathbf{K} = (K_1, \dots, K_n)$ and form a new knowledge operator out of it?

We make use of revision operators to represent knowledge revision. Different agents may use different revision operators depending on how introspective they are. An agent's revision type tells us which operators the agent uses as well as the order in which she employs them. Before formally introducing revision types in Subsection \ref{subsec:RevisionTypes}, we identify all feasible outcomes of knowledge revision in Subsection \ref{subsec:FeasibleRevisions}. In Subsection \ref{subsec:CharacPossibilityRelations}, we provide a full characterization of knowledge revision in terms of possibility relations. In Subsection \ref{subsec:PartialOrderTypes} we compare revision types by defining a partial order over them.

\subsection{Feasible revisions}\label{subsec:FeasibleRevisions}
The input to knowledge revision is a profile $\mathbf{K} = (K_1, \dots, K_n)$ of $n\geq 1$ knowledge operators; the output for agent $i\in \{1,\dots, n\}$ is a new knowledge operator denoted by $\mathbf{K}^{\theta_i}$ or $(K_1, \dots, K_n)^{\theta_i}$, where $\theta_i$ is agent $i$'s revision type. We define revision types in the next subsection; for the time being it is enough to know that any agent is one out of several revision types.

Since the input to knowledge revision is a profile of knowledge operators, agents revise their entire body of knowledge, i.e., they revise what they know or would know at any possible state of the world.\footnote{This form of comprehensive revision is analogous to the rational updating of information partitions introduced by \cite{weyers1992} in the literature of rational dialogues \citep{geanakoplos1982}. In a nutshell, Weyers argues that rational updating of an information partition should involve all the cells of the partition itself. Our model extends Weyers's idea to knowledge operators.} The assumption below spells out how knowledge revision takes place. To simplify notation, from now on we write $\cup_i K_i$ and $\cap_i K_i$ when the underlying index set is clear from the context.

\begin{assumption}\label{Assm:Rev}
Let $\mathbf{K} = (K_1, \dots, K_n)$ be a profile of $n\geq 1$ knowledge operators in $\mathcal{K}^1$. Agent $i\in \{1, \dots, n\}$ forms the knowledge operator $\mathbf{K}^{\theta_i}$ out of $\mathbf{K}$ as follows.
\begin{enumerate} 
\item[(i)] Agent $i$'s level of introspection determines which revision operators she employs to process $\mathbf{K}$. In particular:
\begin{enumerate}
\item[(a)] If $i$ is non-introspective, then she employs only the distributive closure $(.)^{\mathbf{d}}$ to form $\mathbf{K}^{\theta_i}$;
\item[(b)] If $i$ is positively introspective, then she employs only the positive introspection operator $(.)^{\bm{+}}$ and the distributive closure $(.)^{\mathbf{d}}$ to form $\mathbf{K}^{\theta_i}$;
\item[(c)] If $i$ is fully introspective, then she employs only the full introspection operator $(.)^{\bm{\pm}}$ and the distributive closure $(.)^{\mathbf{d}}$ to form $\mathbf{K}^{\theta_i}$.
\end{enumerate}
\item[(ii)] The knowledge operator $\mathbf{K}^{\theta_i}$ is such that $(\mathbf{K}^{\theta_i})^{\mathbf{r}} = \mathbf{K}^{\theta_i}$ for all revision operators $(.)^{\mathbf{r}}$ employed by $i$.
\item[(iii)] Agent $i$ forms $\mathbf{K}^{\theta_i}$ in one of two ways:
\begin{enumerate}
\item[(a)] Agent $i$ applies her revision operators to $\cup_j K_j$, which is the pointwise union\footnote{For all events $A$, the value of $\cup_{j} K_j$ at $A$ is $(\cup_{j} K_j) (A) =\cup_{j} K_j (A)$.} of the $n$ operators in $\mathbf{K}$;
\item[(b)] Agent $i$ acts sequentially. First, she applies her revision operators to each $K_j$ in $\mathbf{K}$. Revision operators are applied to each $K_j$ in the same order, and each $K_j$ is revised so that no further strict expansion is possible with the revision operators employed by $i$. Subsequently, $i$ takes the pointwise union of the $n$ operators just revised.
\end{enumerate}
\end{enumerate}
\end{assumption}

Part (i) says that an agent's level of introspection is reflected by the revision operators she adopts. Everyone employs the distributive closure. Part (ii) says that agents revise knowledge as much as possible, given the revision operators they adopt. For example, $\mathbf{K}^{\theta_i}$ can be the revised knowledge operator of a positively introspective $i$ only if it cannot be strictly expanded by applying the distributive closure or the positive introspection operator. As for part (iii), there are two ways in which an agent $i$ can form $\mathbf{K}^{\theta_i}$ out of the profile $\mathbf{K}$. The agent can aggregate all the $n$ operators in $\mathbf{K}$ and then apply the relevant revision operators. Alternatively, the agent can process each $K_j$ in $\mathbf{K}$ separately before aggregating them all. In this regard our analysis is positive but not normative: we do not take a stand on the ``correct'' order in which revision operators are to be used.

The next proposition lists all the feasible outcomes of knowledge revision. From now on, we abuse notation by omitting parenthesis when two or more revision operators are composed. For example, we write $(\cup_{j} K_j)^{\bm{+} \mathbf{d}}$ instead of the more accurate $((\cup_{j} K_j)^{\bm{+}})^{\mathbf{d}}$.

\begin{proposition}\label{Prop:FeasRev}
Let $\mathbf{K} = (K_1, \dots, K_n)$ be a profile of $n\geq 1$ knowledge operators in $\mathcal{K}^1$. Suppose an agent $i\in \{1, \dots, n\}$ revises $\mathbf{K}$ as per Assumption \ref{Assm:Rev}. Then the following hold.
\begin{enumerate}
\item[(i)] If agent $i$ is non-introspective, then
\begin{equation*}
\mathbf{K}^{\theta_i} = (\cup_{j} K_j)^{\mathbf{d}}.
\end{equation*}
\item[(ii)] If agent $i$ is positively introspective, then $\mathbf{K}^{\theta_i}$ is equal to at least one of the following three knowledge operators:
\begin{equation*}
(\cup_{j} K_j)^{\bm{+} \mathbf{d}}, \quad (\cup_{j} K_j)^{\mathbf{d}\bm{+} \mathbf{d}}, \quad (\cup_{j} K_j^{\bm{+}\mathbf{d}})^{\mathbf{d}}.
\end{equation*}
\item[(iii)] If agent $i$ is fully introspective, then $\mathbf{K}^{\theta_i}$ is equal to at least one of the following three knowledge operators:
\begin{equation*}
(\cup_{j} K_j)^{\bm{\pm} \mathbf{d}}, \quad (\cup_{j} K_j)^{\mathbf{d}\bm{\pm} \mathbf{d}}, \quad (\cup_{j} K_j^{\bm{\pm}\mathbf{d}})^{\mathbf{d}}.
\end{equation*}
\end{enumerate}
\end{proposition}
\begin{proof}
See Appendix \ref{Proof:FeasRev}.
\end{proof}

When $i$ is non-introspective, the outcome of knowledge revision is the operator $\mathbf{K}^{\theta_i} = (\cup_{j} K_j)^{\mathbf{d}}$. This means that revising knowledge according to part (iii-a) or part (iii-b) of Assumption \ref{Assm:Rev} yields the same outcome. Things are different when $i$ is positively introspective. When $i$ revises knowledge according to part (iii-a) of Assumption \ref{Assm:Rev}, the outcome is either $\mathbf{K}^{\theta_i} = (\cup_{j} K_j)^{\bm{+} \mathbf{d}}$ or $\mathbf{K}^{\theta_i} = (\cup_{j} K_j)^{\mathbf{d}\bm{+} \mathbf{d}}$. Importantly, both $(\cup_{j} K_j)^{\bm{+} \mathbf{d}}$ and $(\cup_{j} K_j)^{\mathbf{d}\bm{+} \mathbf{d}}$ belong to $\mathcal{K}^2$ yet they need not be equal. This means that two equally introspective agents may reach different conclusions when revising the very same profile of knowledge operators just because of the different order in which they apply the distributive closure and the positive introspection operator. When knowledge revision is carried out as in part (iii-b) of Assumption \ref{Assm:Rev}, the outcome is $\mathbf{K}^{\theta_i} = (\cup_{j} K_j^{\bm{+}\mathbf{d}})^{\mathbf{d}}$. The latter operator belongs to $\mathcal{K}^2$ but it is not necessarily equal to  $(\cup_{j} K_j)^{\bm{+} \mathbf{d}}$ or $(\cup_{j} K_j)^{\mathbf{d}\bm{+} \mathbf{d}}$. Analogously, when $i$ is fully introspective, the outcome of her knowledge revision is $\mathbf{K}^{\theta_i} = (\cup_{j} K_j)^{\bm{\pm} \mathbf{d}}$ or $\mathbf{K}^{\theta_i} = (\cup_{j} K_j)^{\mathbf{d}\bm{\pm} \mathbf{d}}$ if she follows part (iii-a) of Assumption \ref{Assm:Rev}, and it is $\mathbf{K}^{\theta_i} = (\cup_{j} K_j^{\bm{\pm}\mathbf{d}})^{\mathbf{d}}$ if she follows part (iii-b). All these three knowledge operators belong to $\mathcal{K}^3$ but they need not be equal to each other.

The seven knowledge operators in Proposition \ref{Prop:FeasRev} need not be equal when at least one of the $n$ knowledge operators in $\mathbf{K}$ is not represented by an equivalence relation. If all the operators in $\mathbf{K}$ are represented by equivalence relations, then the seven knowledge operators in Proposition \ref{Prop:FeasRev} become equal. See Claim \ref{claim:K3} in Appendix \ref{Proof:FeasRev} for a formal statement.

\subsection{Revision types}\label{subsec:RevisionTypes}
Proposition \ref{Prop:FeasRev} identifies the possible outcomes of revising a given profile of knowledge operators. In Section \ref{Sec:DK}, agents attain distributed knowledge with a communication process in which they need to revise different profiles of knowledge operators at different moments. Hence we need to determine how agents revise their knowledge over time. We assume that everyone is consistent: While different agents may revise knowledge differently, any given agent revises in the same way all profiles of knowledge operators she might face. For example, a positively introspective agent $i$ always turns any feasible profile $(K_1, \dots, K_n)$ into $(K_1, \dots, K_n)^{\theta_i} = (\cup_{j} K_j)^{\bm{+} \mathbf{d}}$; another positively introspective agent, say $h$, always forms $(K_1, \dots, K_n)^{\theta_h} = (\cup_{j} K_j)^{\mathbf{d}{+}\mathbf{d}}$ out of any $(K_1, \dots, K_n)$. In compact form, we say that $i$'s revision type is $\theta_i = (\bm{+} \mathbf{d})$ whereas $h$'s revision type is $\theta_h = (\mathbf{d} \bm{+} \mathbf{d})$. Thus revision types are a concise representation of how agents revise any possible profile of knowledge operators they might deal with.

In light of Proposition \ref{Prop:FeasRev}, there are seven possible revision types, all of which are reported in Table \ref{tab:OrderRevision}: the first column indicates the outcome of knowledge revision for each type whereas the second column indicates the corresponding label.

\begin{table}[htb]
	\centering
	\begin{tabular}{ c | c }
		$(K_1, \dots, K_n)^{\theta}$ &  Revision type $\theta$\\
		\hline			
	 $(\cup_{i} K_i)^{\mathbf{d}}$ & $(\mathbf{d})$ \\
	$(\cup_{i} K_i)^{\bm{+} \mathbf{d}}$ & $(\bm{+} \mathbf{d})$ \\
	$(\cup_{i} K_i)^{\mathbf{d}\bm{+} \mathbf{d}}$	& $(\mathbf{d} \bm{+} \mathbf{d})$\\
	$(\cup_{i} K_i^{\bm{+}\mathbf{d}})^{\mathbf{d}}$	& $(\bm{+} \mathbf{d}\text{s})$\\
	$(\cup_{i} K_i)^{\bm{\pm} \mathbf{d}}$	& $(\bm{\pm} \mathbf{d})$\\
	$(\cup_{i} K_i)^{\mathbf{d}\bm{\pm} \mathbf{d}}$	& $(\mathbf{d} \bm{\pm} \mathbf{d})$\\
	$(\cup_{i} K_i^{\bm{\pm}\mathbf{d}})^{\mathbf{d}}$	& $(\bm{\pm} \mathbf{d}\text{s})$
	\end{tabular}
	\caption{Revision types.}
	\label{tab:OrderRevision}
\end{table}

\subsection{Revision in terms of possibility relations}\label{subsec:CharacPossibilityRelations}
When an agent revises a profile $(K_1, \dots, K_n)$, she forms a knowledge operator $(K_1, \dots, K_n)^{\theta}$ that depends on her type $\theta \in \Theta$, where $\Theta$ is the set of all revision types. Since knowledge operators and possibility relations are dually isomorphic (Proposition \ref{Prop:Duality}), it is natural to look for the possibility relation representing $(K_1, \dots, K_n)^{\theta}$. The following proposition identifies the possibility relations representing the outcome of knowledge revision for each revision type.

\begin{proposition}\label{Prop:Representation}
Let $K_1, \dots, K_n$ be $n\geq 1$ knowledge operators in  $\mathcal{K}^1$, and let $\mathbf{P} = \left(P_{1}, \dots, P_{n}\right)$ be the profile of possibility relations in $\mathcal{P}^1$ representing those $n$ knowledge operators. Then the following hold.
\begin{itemize}
\item[(i)] $(\cup_{i} K_i)^{\mathbf{d}}$ is represented by $\cap_{i} P_{i}$;
\item[(ii)] $(\cup_{i} K_i)^{\bm{+}\mathbf{d}}$ is represented by $T_{\mathbf{P}}$, i.e., the left $n$-ary trace of $\mathbf{P}$;
\item[(iii)] $(\cup_{i} K_i)^{\mathbf{d}\bm{+}\mathbf{d}}$ is represented by $T_{\cap_{i} P_{i}}$, i.e., the left trace of $\cap_{i} P_{i}$;
\item[(iv)] $(\cup_{i} K_i^{\bm{+}\mathbf{d}})^{\mathbf{d}}$ is represented by $\cap_i T_{P_i}$;
\item[(v)] $(\cup_{i} K_i)^{\bm{\pm}\mathbf{d}}$ is represented by $E_{\mathbf{P}}$, i.e., the symmetric part of the left $n$-ary trace of $\mathbf{P}$;
\item[(vi)] $(\cup_{i} K_i)^{\mathbf{d}\bm{\pm}\mathbf{d}}$ is represented by $E_{\cap_{i} P_{i}}$, i.e., the symmetric part of the left trace of $\cap_{i} P_{i}$;
\item[(vii)] $(\cup_{i} K_i^{\bm{\pm}\mathbf{d}})^{\mathbf{d}}$ is represented by $\cap_i E_{P_i}$.
\end{itemize}
\end{proposition}
\begin{proof}
See Appendix \ref{Proof:Representation}.
\end{proof}

We know from Subsection \ref{subsec:Duality} that we can use equation \eqref{eq:ReprMSZ} to find the relation representing a given knowledge operator. The proposition above goes one step further and shows that the representation of $(K_1, \dots, K_n)^{\theta}$ can be written as a transformation of the possibility relations $\left(P_{1}, \dots, P_{n}\right)$ that represent $(K_1, \dots, K_n)$.

Proposition \ref{Prop:Representation} has ``practical'' relevance in the sense that it is often easier to compute revised knowledge through possibility relations than through revision operators. For instance, suppose we want to calculate $(\cup_{i} K_i)^{\bm{+}\mathbf{d}}$. A direct calculation consists in applying the positive introspection operator and then the distributive closure to $\cup_{i} K_i$. This is typically straightforward yet laborious. Alternatively, we can take an indirect route by computing the left $n$-ary trace $T_{\mathbf{P}}$ and then using equation \eqref{eq:ReprKbyP} to form $(\cup_{i} K_i)^{\bm{+}\mathbf{d}}$. The indirect route is often faster than the direct one. A case in point is the example in Subsection \ref{subsec:Example}.

\subsection{A partial order on the set of revision types}\label{subsec:PartialOrderTypes}
There is a natural order on the set of revision types. Let $\preceq$ be the binary relation on $\Theta$ defined as follows: For all types $\theta, \theta' \in \Theta$, we have $\theta \preceq \theta'$ if for all finite state spaces $\Omega$,
\begin{equation*}
(K_1, \dots, K_n)^{\theta} \leq (K_1, \dots, K_n)^{\theta'}
\end{equation*}
for all finite sequences of $n\geq 1$ operators $K_1, \dots, K_n$ in $\mathcal{K}^1(\Omega)$. In words, $\theta \preceq \theta'$ if revised knowledge for type $\theta'$ is always more detailed than revised knowledge for type $\theta$. It is easy to check that $\preceq$ is a partial order, i.e., a reflexive, transitive and antisymmetric binary relation. The Hasse diagram of $\preceq$ follows from the claim below and is represented in Figure \ref{fig:HassePreorder}.

\begin{figure}[h]
\centering
\begin{tikzpicture}[auto, node distance=2.5cm,
                    thick,main node/.style={}]

  \node[main node] (1) {$(\bm{\pm}\mathbf{d}\text{s})$};
  \node[main node] (3) [below  left of=1] {$(\bm{+}\mathbf{d}\text{s})$};
  \node[main node] (2) [below right of=1] {$(\bm{\pm}\mathbf{d})$};
  \node[main node] (4) [below right of=3] {$(\bm{+}\mathbf{d})$};
  \node[main node] (5) [below right of=4] {$(\mathbf{d}\bm{+}\mathbf{d})$};
  \node[main node] (6) [below right of=2] {$(\mathbf{d}\bm{\pm}\mathbf{d})$};
  \node[main node] (7) [below of=5] {$(\mathbf{d})$};

  \path[every node/.style={font=\small}]
    (2) edge node {} (1)
    (3) edge node {} (1)
    (4) edge node {} (3)
    (4) edge node {} (2)
    (5) edge node {} (4)
    (2) edge node {} (6)
    (5) edge node {} (6)
    (5) edge node {} (7);   
\end{tikzpicture}
	\caption{The partial order $\preceq$ on the set of revision types.}
\label{fig:HassePreorder}
\end{figure}
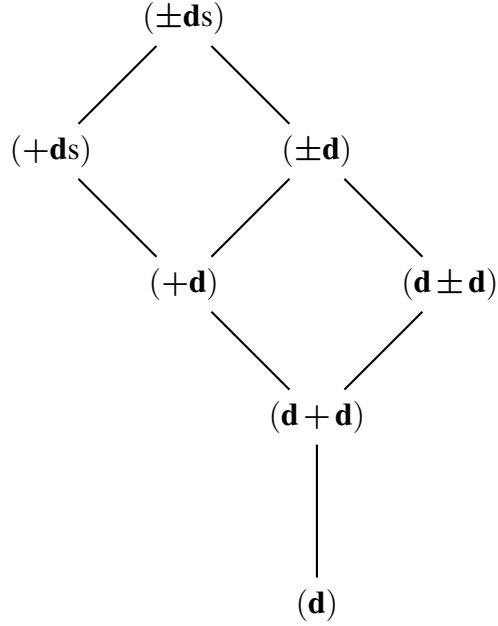

\begin{claim}\label{claim:K1}
Let $K_1, \dots, K_n$ be $n\geq 1$ knowledge operators in  $\mathcal{K}^1$, and let $\mathbf{P} = \left(P_{1}, \dots, P_{n}\right)$ be the profile of relations in $\mathcal{P}^1$ that represent those $n$ knowledge operators. Then the following are true:
\begin{itemize}
\item[(i)] $(\cup_{i} K_i)^{\mathbf{d}} \leq (\cup_{i} K_i)^{\mathbf{d}\bm{+}\mathbf{d}} \leq (\cup_{i} K_i)^{\mathbf{d}\bm{\pm}\mathbf{d}} \leq (\cup_{i} K_i)^{\bm{\pm}\mathbf{d}} \leq (\cup_{i} K_i^{\bm{\pm}\mathbf{d}})^{\mathbf{d}}$;
\item[(ii)] $(\cup_{i} K_i)^{\mathbf{d}\bm{+}\mathbf{d}} \leq (\cup_{i} K_i)^{\bm{+}\mathbf{d}} \leq (\cup_{i} K_i)^{\bm{\pm}\mathbf{d}}$;
\item[(iii)] $(\cup_{i} K_i)^{\bm{+}\mathbf{d}} \leq (\cup_{i} K_i^{\bm{+}\mathbf{d}})^{\mathbf{d}} \leq (\cup_{i} K_i^{\bm{\pm}\mathbf{d}})^{\mathbf{d}}$.
\end{itemize}
Dually, also the following are true:
\begin{itemize}
\item[(i')] $\cap_{i} P_{i} \supseteq T_{\cap_{i} P_{i}} \supseteq E_{\cap_{i} P_{i}} \supseteq E_{\mathbf{P}} \supseteq \cap_i E_{P_i}$;
\item[(ii')] $T_{\cap_{i} P_{i}} \supseteq T_{\mathbf{P}} \supseteq E_{\mathbf{P}}$;
\item[(iii')] $T_{\mathbf{P}} \supseteq \cap_i T_{P_i} \supseteq \cap_i E_{P_i}$.
\end{itemize}
\end{claim}
\begin{proof}
See Appendix \ref{Proof:K1}.
\end{proof}

The pair $(\Theta, \preceq)$ forms a lattice. The top and bottom elements are $(\bm{\pm} \mathbf{d}\text{s})$ and $(\mathbf{d})$, respectively. There are types that are not comparable, e.g., $(\bm{+} \mathbf{d}\text{s})$ and $(\bm{\pm} \mathbf{d})$. This means that revised knowledge for $(\bm{+} \mathbf{d}\text{s})$ is not always more detailed than revised knowledge for $(\bm{\pm} \mathbf{d})$, and vice versa. In other words, each type cannot replicate all the inferences that the other type can conceivably draw.

\section{Distributed knowledge}\label{Sec:DK}
This section gives a formal definition of distributed knowledge and then shows  how distributed knowledge is determined by the join (least upper bound) of the revision types in the group.

We define distributed knowledge as the knowledge that group members attain by communicating with one another and sharing everything they know. Communication is non-strategic and truthful. Agents can communicate for as long as they want. Formally, time is discrete and indexed by $t = 0, 1, 2, \dots$. Let $(\theta_1, \dots, \theta_N)$ be the group's revision types, which are held fixed throughout the entire communication process. Let $(K_1, \dots, K_N)$ be a profile of initial knowledge operators in $\mathcal{K}^1$. At each stage $t$, everyone communicates her knowledge operator to everyone else. Then every agent revises her knowledge in light of what she has learned from others and according to her own revision type. At stage $t + 1$, the same process of communication and revision takes place again, and so on in all subsequent stages. The resulting knowledge of each agent $i$ is represented by the sequence $(K_i^t: t = 0, 1, 2,\dots)$ defined recursively as follows. For each $i\in I$,
\begin{align*}
K_i^0 &= K_i \\
K_i^t &= (K_1^{t-1}, \dots, K_N^{t-1})^{\theta_i} \quad \text{ for } t=1, 2, \dots
\end{align*}

Distributed knowledge is what anyone knows ``at the end'' of the communication process just described. Specifically, distributed knowledge is the knowledge operator $K_{\mathbf{D}}$ obtained as
\begin{equation}\label{eq:DKdef}
K_{\mathbf{D}} = \cup_{i\in I}\left(\cup_{t = 0}^{\infty} K_i^t\right).
\end{equation}

A couple of remarks. First, agents are not required to know the other group members' revision types. The reason is that the way in which an agent revises knowledge depends solely on her own revision type and not on others'. Second, while we assume that all initial knowledge operators lie in $\mathcal{K}^1$, we could alternatively assume that initial operators reflect the agents' level of introspection. That is, we could assume that an initial knowledge operator is in $\mathcal{K}^2$ or $\mathcal{K}^3$ if the agent under consideration is positively or fully introspective, respectively. This alternative assumption is slightly less general than the one we actually make and, more substantially, would not alter any result of this or the preceding section.

The next proposition shows how distributed knowledge in \eqref{eq:DKdef} reduces to a simple expression that depends only on the initial knowledge operators and the join of the revision types in the group. 

\begin{proposition}\label{Prop:DK}
Suppose $(\theta_1, \dots, \theta_N)$ are the revision types of agents in $I$. Call $\bar{\theta}$ the corresponding join, i.e., $\bar{\theta}:= \theta_1 \vee \dots \vee \theta_N$. In addition, suppose $(K_1, \dots, K_N)$ are the initial knowledge operators, with $K_i \in \mathcal{K}^1$ for all $i \in I$. Then distributed knowledge in $I$ is
\begin{equation*}
	K_{\mathbf{D}} = (K_1, \dots, K_N)^{\bar{\theta}}.
\end{equation*}
\end{proposition}

\begin{proof}
There are two cases to consider. In the first, $\bar{\theta} = \theta_i$ for some $i\in I$. Without loss of generality, suppose $\bar{\theta} = \theta_N$. Hence, for all $i\in I$,
\begin{equation*}
K_i^1 = (K_1, \dots, K_N)^{\theta_i} \leq (K_1, \dots, K_N)^{\bar{\theta}} = K_N^1.
\end{equation*}
Notice that $\cup_{i\in I} K_i^1 = K_N^1$. In addition, we have $\cup_{i\in I} (K_i^1)^{\bm{+} \mathbf{d}} = K_N^1$ when $\theta_N = (\bm{+} \mathbf{d}\text{s})$ and $\cup_{i\in I} (K_i^1)^{\bm{\pm} \mathbf{d}} = K_N^1$ when $\theta_N = (\bm{\pm} \mathbf{d}\text{s})$. Then it is easy to check that for all $i\in I$,
\begin{equation*}
K_N^1 \leq K_i^2 = (K_1^1, \dots, K_N^1)^{\theta_i} \leq  (K_1^1, \dots, K_N^1)^{\bar{\theta}} = K_N^2 = K_N^1.
\end{equation*}

This means that everyone has the same knowledge $K_N^1$ at the end of the second stage of communication. Consequently, no one will strictly expand her knowledge in subsequent communication stages and $K_i^t = K_N^1$ for all $i\in I$ and $t \geq 2$. Therefore, we have $K_{\mathbf{D}} = K_N^1 = (K_1, \dots, K_N)^{\bar{\theta}}$.

In the second case, we have $\bar{\theta} \neq \theta_i$ for all $i\in I$. This means that there are exactly two maximal types in the group. Without loss of generality, suppose the two maximal types are $\theta_{N - 1}$ and $\theta_{N}$. Three subcases are possible. First, suppose $\theta_{N - 1} = (\bm{+} \mathbf{d}\text{s})$, $\theta_{N} = (\mathbf{d}\bm{\pm}\mathbf{d})$, and $\theta_{i} \neq (\bm{\pm} \mathbf{d}\text{s})$, $(\bm{\pm} \mathbf{d})$ for all $i\in I$. Notice that $\bar{\theta} = (\bm{\pm} \mathbf{d}\text{s})$. By Claim \ref{claim:K1}, for all $i \in I$, either
\begin{equation*}
	K_i^1 = (K_1, \dots, K_N)^{\theta_i} \leq (K_1, \dots, K_N)^{(\bm{+} \mathbf{d}\text{s})} = K_{N-1}^1
\end{equation*}
or
\begin{equation*}
	K_i^1 = (K_1, \dots, K_N)^{\theta_i} \leq (K_1, \dots, K_N)^{(\mathbf{d}\bm{\pm}\mathbf{d})} = K_{N}^1
\end{equation*}
or both. Then
\begin{equation*}
	K_{N-1}^2 = (K_1^1, \dots, K_N^1)^{(\bm{+} \mathbf{d}\text{s})} = (\cup_{i\in I} (K_i^1)^{\bm{+}\mathbf{d}})^{\mathbf{d}} = (K_{N-1}^1 \cup K_{N}^1)^{\mathbf{d}}
\end{equation*}
and
\begin{equation*}
K_{N}^2 = (K_1^1, \dots, K_N^1)^{(\mathbf{d}\bm{\pm}\mathbf{d})} = (\cup_{i\in I} K_i^1)^{\mathbf{d}\bm{\pm}\mathbf{d}} = (K_{N-1}^1 \cup K_{N}^1)^{\mathbf{d}\bm{\pm}\mathbf{d}}.
\end{equation*}
It follows from Claim \ref{claim:K1} that $K_{N-1}^2\leq K_{N}^2$. Thus for all $i\in I$,
\begin{equation*}
	K_N^2 \leq K_i^3 = (K_1^2, \dots, K_N^2)^{\theta_i} \leq  (K_1^2, \dots, K_N^2)^{\theta_N} = K_N^3 = K_N^2,
\end{equation*}
from which it follows that $K_i^t = K_N^2$ for all $i\in I$ and $t \geq 3$, and that $K_{\mathbf{D}} = K_N^2$. It remains to show that $K_N^2 = (K_1, \dots, K_N)^{\bar{\theta}}$. Recall that $K_N^2 = (K_{N-1}^1 \cup K_{N}^1)^{\mathbf{d}\bm{\pm}\mathbf{d}}$. Let $\mathbf{P} = (P_1, \dots, P_N)$ be the profile of possibility relations representing the initial knowledge operators. By Proposition \ref{Prop:Representation}, the operator $K_{N - 1}^1$ is represented by $\cap_{i} T_{P_i}$, the operator $K_{N}^1$ is represented by $E_{\cap_i P_i}$, and $K_N^2$ is represented by the symmetric part of the left trace of $(\cap_{i} T_{P_i}) \cap (E_{\cap_i P_i})$. Now, the relation $(\cap_{i} T_{P_i}) \cap (E_{\cap_i P_i})$ is reflexive and transitive because it is the intersection of two reflexive and transitive relations. Consequently, the left trace of $(\cap_{i} T_{P_i}) \cap (E_{\cap_i P_i})$ coincides with $(\cap_{i} T_{P_i}) \cap (E_{\cap_i P_i})$, and its symmetric part is equal to $(\cap_{i} E_{P_i}) \cap (E_{\cap_i P_i})$. By Claim \ref{claim:K1}, $(\cap_{i} E_{P_i}) \cap (E_{\cap_i P_i}) = \cap_{i} E_{P_i}$. By Proposition \ref{Prop:Representation}, the relation $\cap_{i} E_{P_i}$ represents $(\cup_{i} K_i^{\bm{\pm}\mathbf{d}})^{\mathbf{d}} = (K_1, \dots, K_N)^{(\bm{\pm} \mathbf{d}\text{s})}$. Therefore, we can conclude that $K_{\mathbf{D}} = (K_1, \dots, K_N)^{\bar{\theta}}$.

We can follow, \textit{mutatis mutandis}, the argument above for the remaining two subcases. In the second subcase, suppose $\theta_{N - 1} = (\bm{+} \mathbf{d}\text{s})$, $\theta_{N} = (\bm{\pm}\mathbf{d})$, and $\theta_{i} \neq (\bm{\pm} \mathbf{d}\text{s}) = \bar{\theta}$ for all $i\in I$. Thus we have
\begin{equation}\label{eq:DKproof1}
K_{\mathbf{D}} = K_N^2 = (K_{N-1}^1 \cup K_{N}^1)^{\bm{\pm}\mathbf{d}} = \left((K_1, \dots, K_N)^{(\bm{+} \mathbf{d}\text{s})} \cup (K_1, \dots, K_N)^{(\bm{\pm}\mathbf{d})}\right)^{\bm{\pm}\mathbf{d}}.
\end{equation}
By Proposition \ref{Prop:Representation}, the operator $K_N^2$ in \eqref{eq:DKproof1} is represented by the symmetric part of the left binary trace of $(\cap_i T_{P_i}, E_{\mathbf{P}})$. Since both $\cap_i T_{P_i}$ and $E_{\mathbf{P}}$ are reflexive and transitive, the left binary trace of $(\cap_i T_{P_i}, E_{\mathbf{P}})$ is equal to $(\cap_{i} T_{P_i}) \cap (E_{\mathbf{P}})$, and its symmetric part is $(\cap_{i} E_{P_i}) \cap (E_{\mathbf{P}}) = \cap_{i} E_{P_i}$. The latter relation represents $(\cup_i K_i^{\bm{\pm} \mathbf{d}})^{\mathbf{d}} = (K_1, \dots, K_N)^{(\bm{\pm} \mathbf{d}\text{s})}$, hence $K_{\mathbf{D}} = (K_1, \dots, K_N)^{\bar{\theta}}$.

In the third and final subcase, suppose $\theta_{N - 1} = (\bm{+}\mathbf{d})$, $\theta_{N} = (\mathbf{d}\bm{\pm}\mathbf{d})$, and $\theta_i \neq (\bm{\pm}\mathbf{d})$, $(\bm{+} \mathbf{d}\text{s})$, $(\bm{\pm} \mathbf{d}\text{s})$ for all $i\in I$. Notice that $\bar{\theta} = (\bm{\pm}\mathbf{d})$. It follows from Claim \ref{claim:K2} (see Appendix \ref{Proof:FeasRev}) that $K_{N-1}^2 \leq K_{N}^2$. Thus we have
\begin{equation}\label{eq:DKproof2}
K_{\mathbf{D}} = K_N^2 = (K_{N-1}^1 \cup K_{N}^1)^{\mathbf{d}\bm{\pm}\mathbf{d}} = \left((K_1, \dots, K_N)^{(\bm{+}\mathbf{d})} \; \cup \; (K_1, \dots, K_N)^{(\mathbf{d}\bm{\pm}\mathbf{d})}\right)^{\mathbf{d}\bm{\pm}\mathbf{d}}.
\end{equation}
By Proposition \ref{Prop:Representation}, the operator $K_N^2$ in \eqref{eq:DKproof2} is represented by the symmetric part of the left trace of $(T_{\mathbf{P}}) \cap (E_{\cap_i P_i})$. Since $(T_{\mathbf{P}}) \cap (E_{\cap_i P_i})$ is the intersection of two reflexive and transitive relations, it coincides with its left trace, and its symmetric part is $(E_{\mathbf{P}}) \cap (E_{\cap_i P_i}) = E_{\mathbf{P}}$. The relation $E_{\mathbf{P}}$ represents $(\cup_i K_i)^{\bm{\pm}\mathbf{d}} = (K_1, \dots, K_N)^{(\bm{\pm}\mathbf{d})}$, from which it follows that $K_{\mathbf{D}} = (K_1, \dots, K_N)^{\bar{\theta}}$.
\end{proof}

Proposition \ref{Prop:DK} says that distributed knowledge can be found by revising the initial knowledge operators according to the join of the revision types in the group. The proposition's proof shows how agents attain distributed knowledge through communication. Two cases are possible. In the first, there is a group member, say $N$, whose revision type is equal to the join $\bar{\theta}$. This means that $\theta_i \preceq \theta_N$ for all $i\in I$; in words, agent $N$ can make all the inferences that anyone else in the group makes. As a consequence, distributed knowledge coincides with what agent $N$ knows at the end of the first stage of communication, which is represented by the operator $K_N^1$. This result is perfectly in line with the common view that distributed knowledge is just the knowledge of one distinct group member, to whom all other agents communicate what they know. Notice that one round of unilateral communication is enough to attain distributed knowledge in this case.

In the second case, no one in the group has a revision type equal to the join $\bar{\theta}$. This means that there are two group members, say $N - 1$ and $N$, whose types are different from each other and maximal, hence incomparable. At the end of the first communication stage, it may well be the case that neither $K_{N-1}^1 \leq K_N^1$ nor $K_{N}^1 \leq K_{N-1}^1$ holds. Nonetheless, it turns out that at the end of the second stage, the two knowledge operators $K_{N-1}^2$ and $K_{N}^2$ are comparable, say $K_{N-1}^2 \leq K_{N}^2$. Hence, agent $N$ has the most detailed knowledge in the group and since $\theta_N$ is a maximal revision type, no one else in the group can strictly expand $K_N^2$ in subsequent stages. Therefore, distributed knowledge coincides with what agent $N$ knows at the end of the second, but not necessarily the first, stage of communication. This result is at odds with the interpretation of distributed knowledge as the knowledge of a distinct agent to whom all other group members communicate their knowledge. In fact, even though distributed knowledge coincides with what agent $N$ knows at the end of the second communication stage, $N$ cannot form $K_N^2$ just by learning the others' initial knowledge operators. A preceding communication stage is necessary, in which the other maximal type forms $K_{N-1}^1$ out of the initial knowledge operators. Once agent $N$ learns $K_{N-1}^1$ in the second stage, she can merge it with her previous knowledge $K_{N}^1$ to form $K_N^2$, which coincides with distributed knowledge. In sum, distributed knowledge is the combined knowledge of two distinct agents, to whom everyone communicates her initial knowledge.

The case just discussed can be interpreted as a wisdom of crowd effect. A group where no one has type $\bar{\theta}$ attains the same distributed knowledge that would have been attained if $\bar{\theta}$ had been a group member's type. However, it takes longer for a group without $\bar{\theta}$ to attain distributed knowledge. A related aspect of this is that distributed knowledge can be strictly more detailed than what each group member would come to know just by knowing everyone's initial knowledge. An example is given in Subsection \ref{subsec:Example}. This wisdom of crowd effect can emerge only in groups of agents having different levels of introspection. When all group members are equally introspective, the join of their revision types is already present in the group.

To conclude, the following corollary shows how we can simplify Proposition \ref{Prop:DK} when knowledge is partitional, as it is typically assumed in economics and game theory.

\begin{corollary}
Suppose all the initial knowledge operators are represented by equivalence relations, i.e., $K_i \in \mathcal{K}^3$ for all $i \in I$. Then for any profile of revision types $(\theta_1, \dots, \theta_N)$, distributed knowledge in $I$ is
\begin{equation}\label{eq:DKpart}
K_{\mathbf{D}} = (K_1, \dots, K_N)^{\mathbf{d}}.
\end{equation}
\end{corollary}
\begin{proof}
The result follows easily from Proposition \ref{Prop:DK} and Claim \ref{claim:K3} (see Appendix \ref{Proof:FeasRev}).
\end{proof}

When knowledge is partitional, distributed knowledge can be found just by applying the distributive closure to the union of the initial knowledge operators. The distribution of revision types does not affect distributed knowledge. To see why this is the case, notice that $K_{\mathbf{D}}$ in \eqref{eq:DKpart} is represented by the intersection of the equivalence relations representing $K_1, \dots, K_N$. Since the intersection of equivalence relations is still an equivalence relation, the operator $K_{\mathbf{D}}$ is partitional, i.e., $K_{\mathbf{D}} \in \mathcal{K}^3$. This implies that no revision operator can strictly expand $K_{\mathbf{D}}$.

\subsection{Example}\label{subsec:Example}
This is an example of how distributed knowledge is attained in a group. There are three agents and four states of the world. The initial knowledge operators are $K_1, K_2$ and $K_3$ in Table \ref{tab:ExKnowOp1}. The revised knowledge operators that each possible revision type would form out of $(K_1, K_2, K_3)$ are in Table \ref{tab:ExKnow2}. In addition, Table \ref{tab:ExPosRel} collects the possibility relations representing the initial knowledge operators and their revisions. Notice that $E_{\mathbf{P}}$ in Table \ref{tab:ExPosRel} can be formed directly through minimal sets as per Claim \ref{claim:MINsets}.

\begin{table}[h]
	\centering
	\begin{tabular}{ c | c | c | c | c}
		$A$ &  $K_1(A)$  &  $K_2(A)$ &  $K_3(A)$  &  $\cup_i K_i(A)$\\
		\hline			
		$\emptyset$ & $\emptyset$ & $\emptyset$ & $\emptyset$ & $\emptyset$\\
		$\{\omega_1\}$ & $\emptyset$ & $\{\omega_1\}$ & $\{\omega_1\}$ & $\{\omega_1\}$ \\
		$\{\omega_2\}$ & $\{\omega_2\}$ & $\{\omega_2\}$ & $\emptyset$ & $\{\omega_2\}$\\
		$\{\omega_3\}$ & $\emptyset$ & $\emptyset$ & $\emptyset$ & $\emptyset$\\
		$\{\omega_4\}$ & $\emptyset$ & $\emptyset$ & $\emptyset$ & $\emptyset$\\
		$\{\omega_1, \omega_2\}$ & $\{\omega_1, \omega_2\}$ & $\{\omega_1, \omega_2\}$ & $\{\omega_1\}$ & $\{\omega_1, \omega_2\}$\\
		$\{\omega_1, \omega_3\}$ & $\emptyset$ & $\{\omega_1\}$ & $\{\omega_1\}$ & $\{\omega_1\}$\\
		$\{\omega_1, \omega_4\}$ & $\emptyset$ & $\{\omega_1\}$ & $\{\omega_1\}$ & $\{\omega_1\}$\\
		$\{\omega_2, \omega_3\}$ & $\{\omega_2\}$ & $\{\omega_2\}$ & $\emptyset$ & $\{\omega_2\}$\\
		$\{\omega_2, \omega_4\}$ & $\{\omega_2\}$ & $\{\omega_2\}$ & $\emptyset$ & $\{\omega_2\}$\\
		$\{\omega_3, \omega_4\}$ & $\{\omega_3\}$ & $\emptyset$ & $\emptyset$ & $\{\omega_3\}$\\
		$\{\omega_1, \omega_2, \omega_3\}$ & $\{\omega_1, \omega_2\}$ & $\{\omega_1, \omega_2\}$ & $\{\omega_1\}$ & $\{\omega_1, \omega_2\}$\\
		$\{\omega_1, \omega_2, \omega_4\}$ & $\{\omega_1, \omega_2\}$ & $\{\omega_1, \omega_2\}$ & $\{\omega_1\}$ & $\{\omega_1, \omega_2\}$\\
		$\{\omega_1, \omega_3, \omega_4\}$ & $\{\omega_3, \omega_4\}$ & $\{\omega_1\}$ & $\{\omega_1\}$ & $\{\omega_1, \omega_3, \omega_4\}$\\
		$\{\omega_2, \omega_3, \omega_4\}$ & $\{\omega_2, \omega_3\}$ & $\{\omega_2, \omega_3, \omega_4\}$ & $\{\omega_2, \omega_3, \omega_4\}$ & $\{\omega_2, \omega_3, \omega_4\}$\\
		$\Omega$ & $\Omega$ & $\Omega$ & $\Omega$ & $\Omega$
	\end{tabular}
	\caption{Initial knowledge operators and their union.}
	\label{tab:ExKnowOp1}
\end{table}

To see how distributed knowledge emerges, suppose revision types are $(\theta_1, \theta_2, \theta_3) = ((\mathbf{d}), (\mathbf{d}\bm{+}\mathbf{d}), (\mathbf{d}\bm{\pm}\mathbf{d}))$. Their join is $\theta_3 = (\mathbf{d}\bm{\pm}\mathbf{d})$. By Proposition \ref{Prop:DK}, distributed knowledge is $K_{\mathbf{D}} = (K_1, K_2, K_3)^{(\mathbf{d}\bm{\pm}\mathbf{d})} = (\cup_i K_i)^{\mathbf{d}\bm{\pm}\mathbf{d}}$. Since agent $3$'s revision type is the join, distributed knowledge is attained in one round of communication just by transferring each agent's initial knowledge to agent $3$. We can calculate the operator $(\cup_i K_i)^{\mathbf{d}\bm{\pm}\mathbf{d}}$ by applying the distributive closure, the full introspection operator, and the distributive closure again, to $\cup_i K_i$. The calculations are straightforward yet laborious; it is quicker to leverage on the duality between knowledge operators and possibility relations as follows. By Proposition \ref{Prop:Representation}, the operator $(\cup_i K_i)^{\mathbf{d}\bm{\pm}\mathbf{d}}$ is represented by $E_{\cap_i P_i}$, i.e., the symmetric part of the left trace of $\cap_iP_i$. We can easily calculate $E_{\cap_i P_i}$ just by following the definitions in Subsection \ref{subsec:PRelations}, and the result is in Table \ref{tab:ExPosRel}. Finally, we can form $(\cup_i K_i)^{\mathbf{d}\bm{\pm}\mathbf{d}}$ via equation \eqref{eq:ReprKbyP}.

\begin{table}[h]
	\centering
	\begin{tabular}{ c | c | c | c | c | c | c}
		$\omega$ &  $P_1(\omega)$  &  $P_2(\omega)$ &  $P_3(\omega)$ &  $\begin{matrix*}[l] \cap_i P_i(\omega), \\ T_{\cap_i P_i}(\omega),\\ E_{\cap_i P_i}(\omega)\end{matrix*}$ &  $\begin{matrix*}[l] T_{\mathbf{P}}(\omega), \\ \cap_i T_{P_i}(\omega)\end{matrix*}$ &  $\begin{matrix*}[l] E_{\mathbf{P}}(\omega), \\ \cap_i E_{P_i}(\omega)\end{matrix*}$ \\
		\hline			
		$\omega_1$ & $\{\omega_1,\omega_2\}$ & $\{\omega_1\}$ & $\{\omega_1\}$ & $\{\omega_1\}$ & $\{\omega_1\}$ & $\{\omega_1\}$\\
		$\omega_2$ & $\{\omega_2\}$ & $\{\omega_2\}$ & $\{\omega_2, \omega_3, \omega_4\}$ & $\{\omega_2\}$ & $\{\omega_2\}$ & $\{\omega_2\}$\\
		$\omega_3$ & $\{\omega_3, \omega_4\}$ & $\{\omega_2, \omega_3, \omega_4\}$ & $\{\omega_2, \omega_3, \omega_4\}$& $\{\omega_3, \omega_4\}$ & $\{\omega_3\}$ & $\{\omega_3\}$\\
		$\omega_4$ & $\{\omega_1, \omega_3, \omega_4\}$ & $\{\omega_2, \omega_3, \omega_4\}$ & $\{\omega_2, \omega_3, \omega_4\}$& $\{\omega_3, \omega_4\}$ & $\{\omega_3, \omega_4\}$ & $\{\omega_4\}$
	\end{tabular}
	\caption{Initial possibility relations, their intersection, and their traces.}
	\label{tab:ExPosRel}
\end{table}

To see how the wisdom of crowd effect arises, suppose revision types are $(\theta_1', \theta_2', \theta_3') = ((\mathbf{d}), (\bm{+}\mathbf{d}), (\mathbf{d}\bm{\pm}\mathbf{d}))$. Their join is $(\bm{\pm}\mathbf{d})$, which is not a type in the group. Types $\theta_2'$ and $\theta_3'$ are maximal. Distributed knowledge is $K_{\mathbf{D}} = (K_1, K_2, K_3)^{(\bm{\pm}\mathbf{d})} = (\cup_i K_i)^{\bm{\pm}\mathbf{d}}$. In this case distributed knowledge cannot be attained just by transferring initial knowledge to a single group member. In fact, at the end of the first communication stage, the knowledge operators of the three agents are $K_1^1 = (\cup_i K_i)^{\mathbf{d}}$, $K_2^1 = (\cup_i K_i)^{\bm{+}\mathbf{d}}$ and $K_3^1 = (\cup_i K_i)^{\mathbf{d}\bm{\pm}\mathbf{d}}$, all of which can be found in Table \ref{tab:ExKnow2}. Distributed knowledge is attained in the second stage, where
\begin{align*}
K_1^2 & = (\cup_i K_i^1)^{\mathbf{d}} = \left((\cup_i K_i)^{\bm{+}\mathbf{d}}\right)^{\mathbf{d}} = (\cup_i K_i)^{\bm{+}\mathbf{d}},\\
K_2^2 & = (\cup_i K_i^1)^{\bm{+}\mathbf{d}} = \left((\cup_i K_i)^{\bm{+}\mathbf{d}}\right)^{\bm{+}\mathbf{d}} = (\cup_i K_i)^{\bm{+}\mathbf{d}},\\
K_3^2 & = (\cup_i K_i^1)^{\mathbf{d}\bm{\pm}\mathbf{d}} = \left((\cup_i K_i)^{\bm{+}\mathbf{d}}\right)^{\mathbf{d}\bm{\pm}\mathbf{d}} = (\cup_i K_i)^{\bm{\pm}\mathbf{d}} = K_{\mathbf{D}}.
\end{align*}
Distributed knowledge $(\cup_i K_i)^{\bm{\pm}\mathbf{d}}$ is strictly more detailed than $K_1^1$, $K_2^1$ and $K_3^1$. The latter three operators represent what each agent comes to know just by having access to the initial knowledge in the group.

\begin{table}[h]
\centering
\begin{tabular}{ c | c | c | c}
$A$   &  $\begin{matrix*}[l] (\cup_i K_i)^{\mathbf{d}}(A), \\ (\cup_i K_i)^{\mathbf{d}\bm{+}\mathbf{d}}(A),\\ (\cup_i K_i)^{\mathbf{d}\bm{\pm}\mathbf{d}}(A)\end{matrix*}$&  $\begin{matrix*}[l] (\cup_i K_i)^{\bm{+}\mathbf{d}}(A), \\ (\cup_i K_i^{\bm{+}\mathbf{d}})^{\mathbf{d}} (A)\end{matrix*}$ &  $\begin{matrix*}[l] (\cup_i K_i)^{\bm{\pm}\mathbf{d}}(A), \\ (\cup_i K_i^{\bm{\pm}\mathbf{d}})^{\mathbf{d}} (A)\end{matrix*}$\\
  \hline			
  $\emptyset$  & $\emptyset$ & $\emptyset$ & $\emptyset$\\
  $\{\omega_1\}$   & $\{\omega_1\}$& $\{\omega_1\}$  & $\{\omega_1\}$\\
  $\{\omega_2\}$  & $\{\omega_2\}$ & $\{\omega_2\}$ & $\{\omega_2\}$\\
  $\{\omega_3\}$  & $\emptyset$ & $\{\omega_3\}$ & $\{\omega_3\}$\\
 $\{\omega_4\}$  & $\emptyset$ & $\emptyset$  & $\{\omega_4\}$\\
  $\{\omega_1, \omega_2\}$  & $\{\omega_1, \omega_2\}$ & $\{\omega_1, \omega_2\}$ &  $\{\omega_1, \omega_2\}$\\
  $\{\omega_1, \omega_3\}$  & $\{\omega_1\}$ & $\{\omega_1, \omega_3\}$ & $\{\omega_1, \omega_3\}$\\
   $\{\omega_1, \omega_4\}$  & $\{\omega_1\}$ & $\{\omega_1\}$ & $\{\omega_1, \omega_4\}$\\
   $\{\omega_2, \omega_3\}$  & $\{\omega_2\}$ & $\{\omega_2, \omega_3\}$ & $\{\omega_2, \omega_3\}$\\
   $\{\omega_2, \omega_4\}$  & $\{\omega_2\}$ & $\{\omega_2\}$ & $\{\omega_2, \omega_4\}$\\
   $\{\omega_3, \omega_4\}$  & $\{\omega_3, \omega_4\}$ & $\{\omega_3, \omega_4\}$ & $\{\omega_3, \omega_4\}$\\
   $\{\omega_1, \omega_2, \omega_3\}$  & $\{\omega_1, \omega_2\}$ & $\{\omega_1, \omega_2, \omega_3\}$ & $\{\omega_1, \omega_2, \omega_3\}$\\
   $\{\omega_1, \omega_2, \omega_4\}$  & $\{\omega_1, \omega_2\}$ & $\{\omega_1, \omega_2\}$ &  $\{\omega_1, \omega_2, \omega_4\}$\\
   $\{\omega_1, \omega_3, \omega_4\}$  & $\{\omega_1, \omega_3, \omega_4\}$ & $\{\omega_1, \omega_3, \omega_4\}$ & $\{\omega_1, \omega_3, \omega_4\}$\\
$\{\omega_2, \omega_3, \omega_4\}$  & $\{\omega_2, \omega_3, \omega_4\}$ & $\{\omega_2, \omega_3, \omega_4\}$ & $\{\omega_2, \omega_3, \omega_4\}$\\
$\Omega$  & $\Omega$ & $\Omega$  &$\Omega$
\end{tabular}
\caption{Revised knowledge operators.}
\label{tab:ExKnow2}
\end{table}

\section{Discussion}\label{Sec:Disc}
\subsection{Related work}
Our paper connects the literature on set-theoretic models of distributed knowledge to that on rational dialogues and consensus. In essence, we formalize distributed knowledge as knowledge attainable through a specific type of rational dialogue \textit{\`{a} la} \cite{geanakoplos1982}.

Initiated by \cite{aumann1976agreeing}, the set-theoretic (or semantic) approach to interactive knowledge is standard practice in economics and game theory (see \cite{aumann1999interactive1} for an overview). In contrast to our paper, knowledge is almost invariably partitional in set-theoretic models. Against this backdrop, distributed knowledge has been employed especially in the theory of cooperative games: examples include the fine core \citep{wilson1978informationCore} and information sharing games \citep{slikker2003information-sharing, sasaki2025core}. The lattice-theoretic properties of distributed knowledge in partitional models are studied by \cite{tobias2021meet}. 

\cite{Geanakoplos1989NonPart, GeanakoplosNonPartition} introduces set-theoretic models with non-partitional knowledge and focuses on common knowledge; our paper extends his analysis to distributed knowledge. \cite{fukuda2019distributed} is very close to our work, too. Dispensing with the veridicality axiom, he allows agents to hold false beliefs. In this regard, his analysis is more general than ours. On the other hand, \cite{fukuda2019distributed} does not model the communication process that leads to distributed knowledge. We provide such a model and show that the inference making that comes with communication is crucial in determining what is distributed knowledge in a group. In essence, \cite{fukuda2019distributed} formulates distributed knowledge as knowledge that can be deduced from collective information, whereas our notion of distributed knowledge is based on communication. \cite{Tallon-Vergnaud-Zamir} study communication and belief revision in a semantic model similar to ours, yet they do not examine distributed knowledge. They work in the modal system \textit{KD45}, so allowing for mistaken beliefs. They show, among other things, that belief revision depends on the order in which communication takes place. The order dependence in their model is driven by mistaken beliefs. By contrast, the order dependence of knowledge revision in our model is caused by the lack of positive or negative introspection.

Outside of set-theoretic models, an early study of distributed knowledge is \cite{hilpinen1977remarks}, in philosophy. Later, distributed knowledge has been extensively studied in theoretical computer science and epistemic logic, with \cite{HalpernMoses-Distributed} being most likely the first to offer a formal model in terms of Kripke structures. Subsequent works include, among others, \cite{derHoek-et-alDistributed}, \cite{Roelofse-distributed}, \cite{aagotnes2017resolving}, \cite{BaltagSmets_Learning_What_Others_Know} and \cite{balbiani-ditmarsch_DDK}. The communication protocol in our paper is analogous to ``fully public'' information sharing in \cite{BaltagSmets_Learning_What_Others_Know}. There are two crucial differences between our paper and those from theoretical computer science and epistemic logic. First, while computer scientists and logicians study knowledge using Kripke structures, we use Aumann structures instead. Second, while distributed knowledge is commonly studied for a fixed modal system, we cover cases where different agents in the same group satisfy different modal systems. Specifically, our three categories of agents---namely fully introspective, positively introspective and non-introspective---correspond to the three modal systems \textit{S5}, \textit{S4} and \textit{T}, respectively.

Agents in our model attain distributed knowledge through communication in much the same way as agents in the literature on rational dialogues attain common knowledge and consensus. The literature on rational dialogues was initiated by \cite{geanakoplos1982}; subsequent works include, among many others, \cite{parikh-krasucki1990} and \cite{krasucki1996}. Our model is essentially a rational dialogue with three distinguishing features. First, in a typical rational dialogue agents communicate a fraction of their private information, whereas in our model agents communicate everything they know. Second, knowledge revision in our model is carried out in terms of knowledge operators; in the literature on rational dialogues, knowledge revision is done in terms of information partitions. Finally, to the best of our knowledge, the literature on rational dialogues has focused exclusively on fully introspective and rational agents, whereas we take into account that agents may have imperfect introspection skills.

Our work is also related to the AGM belief revision theory of \cite{alchourron1985logic}. Like in their framework, we model how agents revise knowledge upon receiving new information. There is a crucial difference, though. In AGM theory, new information is in the form of a single proposition; in our paper, new information is given by a knowledge operator that describes what an agent knows at every possible state. This means that, in our model, agents revise both their actual and their counterfactual knowledge. The learning process in our model is also analogous to \cite{basu2019bayesian} and \cite{sadler2021practical}. While \cite{basu2019bayesian} studies probabilistic beliefs in an AGM framework, we study non-probabilistic beliefs and allow for revision of counterfactual knowledge. As for \cite{sadler2021practical}, his model is set-theoretic as it is ours. Nevertheless, he does not consider counterfactual knowledge and his learning axioms are rooted in behavioral economics, whereas our revision operators are based on the standard introspection axioms from the literature on interactive knowledge and beliefs.

Finally, some of our results are analogous to those of \cite{mueller2014oneBayesian}. In the context of observational learning, he studies belief dynamics in groups with both Bayesian and non-Bayesian agents. His main result is that belief aggregation is driven by the Bayesian members of the group---that is, by the ``most rational'' group members. This is conceptually close to our Proposition \ref{Prop:DK}, in which we show that distributed knowledge is determined by the join of the revision types in the group.

\subsection{Knowledge operators vs. possibility relations}
The equivalence between knowledge operators and possibility relations requires some qualification. Proposition \ref{Prop:Duality} says that knowledge operators and possibility relations are equivalent if knowledge satisfies necessitation and distributivity. But distributivity may fail to hold at certain steps of knowledge revision. Take, for example, the operator $K$ in Table \ref{tab:PosInt}. When we apply the positive introspection operator to it, we obtain the new operator $K^{\bm{+}}$, which violates distributivity and cannot be represented by any possibility relation. The ``closest'' knowledge operator representable by a relation is $K^{\bm{+} \mathbf{d}}$, which is represented by the left trace of the possibility relation representing $K$. Thus, knowledge operators give us a complete description of the inferences made at any step of knowledge revision.

\subsection{The negative introspection operator}
In Section \ref{sec:RevisionOperators} we do not consider a revision operator that captures only negative introspection. The reason is that such an operator would eventually coincide with the full introspection operator. To see why this is the case, suppose $(.)^{\bm{-}}$ is the revision operator that maps each $K$ in $\mathcal{K}^v$ to the knowledge operator $K^{\bm{-}}$ constructed as follows. For all $A \subseteq \Omega$,
\begin{equation*}
	K^{\bm{-}} (A) = 
	\begin{cases}
		A & \text{ if } A \in \mathsf{Img}(\lnot K)\\
		K(A) & \text{otherwise}.
	\end{cases}
\end{equation*}
One can check that (a) the operator $(.)^{\bm{-}}$ is not idempotent and (b) applying $(.)^{\bm{-}}$ twice gives exactly the full introspection operator, i.e., $K^{\bm{--}} = K^{\bm{\pm}}$.

\subsection{On the notion of distributed knowledge}
We mentioned in the Introduction that distributed knowledge can be interpreted as knowledge of a ``wise man'' who has access to what each group member knows. The wise man is usually interpreted as an external observer. This interpretation raises questions about the wise man's reasoning skills when the group consists of differently introspective agents. In particular, what is the wise man's revision type? Should it be the join of the revision types in the group? Or should we interpret the wise man as someone having the highest feasible type $(\bm{\pm} \mathbf{d}\text{s})$ regardless of the actual distribution of types in the group? Taking a more pragmatic approach, our formulation circumvents these conceptual issues. We define distributed knowledge as what group members actually come to know through communication and full information sharing, without postulating any hypothetical individual outside the group. Our formulation is close to the notion of ``resolving distributed knowledge'' of \cite{aagotnes2017resolving}; it is also close to the principle of full communication put forth by \cite{derHoek-et-alDistributed} and further examined by \cite{Roelofse-distributed}. The principle of full communication says that if a fact is distributed knowledge, then it should be possible for group members to establish that fact through communication. We follow the principle by defining distributed knowledge exactly as what agents can establish through communication.

Another aspect of our formalization is the interpretation of epistemic formulas. It is well known that the standard semantics of distributed knowledge may give rise to perplexing interpretations---see \cite{aagotnes2017resolving}, among others. To see this, consider our motivating example. Before any communication, the event ``Carol knows that $\{\omega_2\}$'' is the empty set. After communication, the event ``Carol knows that $\{\omega_2\}$'' is equal to $\{\omega_2\}$. Is the event ``Carol knows that $\{\omega_2\}$'' distributed knowledge in $\omega_2$? If we take this event to be the empty set, it is clear that it cannot be distributed knowledge. But we believe that the following interpretation is more appropriate. After talking with Bob, Carol comes to know that $\{\omega_2\}$ holds in $\omega_2$. Thus ``Carol knows that $\{\omega_2\}$'' is equal to $\{\omega_2\}$ and is distributed knowledge in $\omega_2$. More generally, since we define distributed knowledge as knowledge attained through communication, we implicitly assume that epistemic events are formed and evaluated with respect to what agents know after communication, which is represented by $K_{\mathbf{D}}$ for everyone.\footnote{Correspondence with Hans van Ditmarsch, whom I thank, helped me clarify this point.} In the epistemic logic literature, \cite{balbiani-ditmarsch_DDK} take a similar approach.

\subsection{Common knowledge}
What is the relationship between common and distributed knowledge? In essence, common knowledge is a static concept: it refers to events that are public at a given point in time---see \citet[p. 385]{GeanakoplosNonPartition}. As such, common knowledge does not presuppose any communication process. It is represented by the transitive closure of the union of the agents' possibility relations, and does not depend on the agents' revision types. Distributed knowledge is a dynamic concept: it describes what agents come to know by sharing all their private information. If an event is common knowledge, then it is also distributed knowledge. The converse need not be true. However, when people communicate as they do in Section \ref{Sec:DK}, they eventually reach a situation where common and distributed knowledge coincide in that everybody has the same knowledge. Because of this, we can also interpret our notion of distributed knowledge as \textit{potential} common knowledge: distributed knowledge is what becomes common knowledge through communication.

\newpage
\appendix
\section{Appendix}
\subsection{A characterization of the symmetric part $E_{\mathbf{P}}$}\label{MinimalSets}
Here we give a characterization of $E_{\mathbf{P}}$ in terms of minimal sets. This is helpful because it enables us to find $E_{\mathbf{P}}$ without computing the left $n$-ary trace $T_{\mathbf{P}}$. Given $\mathbf{P}$, the minimal set of $\mathbf{P}$ at $\omega$ is the set $\mathsf{Min} (\omega)$ of minimal elements of $\{P_i(\omega): i\in \{1, \dots, n\}\}$. More formally,
\begin{equation*}
	\mathsf{Min} (\omega) := \{P_i(\omega): i \in \{1, \dots, n\} \text{ and } P_j(\omega)\subseteq P_i(\omega) \implies P_j(\omega) = P_i(\omega)\text{ for all } j\in \{1, \dots, n\}\}.
\end{equation*}
\begin{claim}\label{claim:MINsets}
	Given an $n$-tuple $\mathbf{P} = (P_1, \dots, P_n)$ of binary relations in $\mathcal{P}$, we have that for all $\omega, \omega' \in \Omega$,
	\begin{equation*}
		(\omega, \omega') \in E_{\mathbf{P}} \iff \mathsf{Min} (\omega) = \mathsf{Min} (\omega').
	\end{equation*}
\end{claim}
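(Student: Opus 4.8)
The plan is to prove the two implications separately, working directly from the definitions of $T_{\mathbf{P}}$ and $\mathsf{Min}(\omega)$. The key observation throughout is that, for a \emph{finite} state space $\Omega$, any non-empty finite collection of sets has at least one minimal element, and in fact every set in the collection contains some minimal element. More precisely, for each $i$ there is a $k$ with $P_k(\omega) \subseteq P_i(\omega)$ and $P_k(\omega)\in\mathsf{Min}(\omega)$; this is the workhorse fact I would isolate as a preliminary remark before tackling either direction.

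First I would unpack the condition $(\omega,\omega')\in T_{\mathbf{P}}$: it says that for every $i$ there is a $j$ with $P_j(\omega')\subseteq P_i(\omega)$. I claim this is equivalent to: for every $P\in\mathsf{Min}(\omega)$ there is a $Q\in\mathsf{Min}(\omega')$ with $Q\subseteq P$. The forward direction uses that a minimal set of $\omega$ is in particular some $P_i(\omega)$, then applies the definition of $T_{\mathbf{P}}$ to get some $P_j(\omega')\subseteq P_i(\omega)$, and finally shrinks $P_j(\omega')$ down to a minimal element $Q\in\mathsf{Min}(\omega')$ with $Q\subseteq P_j(\omega')\subseteq P_i(\omega)$. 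The backward direction uses that every $P_i(\omega)$ contains some minimal element $P\in\mathsf{Min}(\omega)$, which by hypothesis contains some $Q\in\mathsf{Min}(\omega')\subseteq\{P_k(\omega'):k\}$, giving the required $j$. Hence $(\omega,\omega')\in T_{\mathbf P}$ iff every element of $\mathsf{Min}(\omega)$ dominates (contains) some element of $\mathsf{Min}(\omega')$.

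Now for $E_{\mathbf{P}}$: we have $(\omega,\omega')\in E_{\mathbf P}$ iff the above holds both ways, i.e. every element of $\mathsf{Min}(\omega)$ contains some element of $\mathsf{Min}(\omega')$ \emph{and} vice versa. I would then show this symmetric "mutual domination" condition forces $\mathsf{Min}(\omega)=\mathsf{Min}(\omega')$. Take $P\in\mathsf{Min}(\omega)$; it contains some $Q\in\mathsf{Min}(\omega')$, which in turn contains some $P'\in\mathsf{Min}(\omega)$, so $P'\subseteq Q\subseteq P$. By minimality of $P$ within $\{P_i(\omega):i\}$ and the fact that $P'$ is one of those sets, $P'=P$, whence $Q=P$ and so $P\in\mathsf{Min}(\omega')$. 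This gives $\mathsf{Min}(\omega)\subseteq\mathsf{Min}(\omega')$, and the symmetric argument gives the reverse inclusion. The converse implication ($\mathsf{Min}(\omega)=\mathsf{Min}(\omega')\Rightarrow(\omega,\omega')\in E_{\mathbf P}$) is then immediate from the equivalence established in the previous paragraph, since equal minimal sets trivially satisfy mutual domination.

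The main obstacle is nothing deep but rather bookkeeping: one must be careful that "minimal element of a collection of sets" is taken with respect to the $n$-tuple indexing (a set may appear multiple times, and distinct indices may give equal sets), so minimality is about set-inclusion among the \emph{values} $\{P_i(\omega):i\in\{1,\dots,n\}\}$, and $\mathsf{Min}(\omega)$ is a set of sets, not of indices. Once the "every $P_i(\omega)$ contains a minimal set" lemma is stated cleanly for finite collections, the rest is a short chain of inclusions exploiting minimality to collapse $P'\subseteq Q\subseteq P$ into equalities. I would also note explicitly that finiteness of $\Omega$ (hence of the collection of lower contour sets) is what guarantees minimal elements exist, so the argument does not rely on any well-foundedness beyond that.
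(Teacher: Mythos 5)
Your proposal is correct and follows essentially the same route as the paper: both pick a minimal element $P_i(\omega)\in\mathsf{Min}(\omega)$, use the two trace conditions to build a chain $P_\ell(\omega)\subseteq P_k(\omega')\subseteq P_i(\omega)$ with the middle term minimal at $\omega'$, and collapse it to equalities by minimality, with the converse direction being immediate. Your intermediate reformulation of $T_{\mathbf{P}}$ as ``every element of $\mathsf{Min}(\omega)$ contains some element of $\mathsf{Min}(\omega')$'' is just a cleaner packaging of the same argument, not a different one.
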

\begin{proof}
	It is immediate that $\mathsf{Min} (\omega) = \mathsf{Min} (\omega')$ implies $(\omega, \omega') \in E_{\mathbf{P}}$. To show the other direction, suppose $(\omega, \omega') \in E_{\mathbf{P}}$. Take any $P_i(\omega)\in \mathsf{Min} (\omega)$. Since $(\omega, \omega')\in T_{\mathbf{P}}$, there is a $j$ such that $P_j(\omega')\subseteq P_i(\omega)$. Furthermore, there exists a $k$ such that $P_k(\omega') \in \mathsf{Min} (\omega')$ and $P_k(\omega')\subseteq P_j(\omega')$. Thus $P_k(\omega') \subseteq P_i(\omega)$. Analogously, since $(\omega', \omega)\in T_{\mathbf{P}}$ we can find a $P_{\ell}(\omega) \in \mathsf{Min}(\omega)$ such that $P_{\ell}(\omega) \subseteq P_k(\omega')$. Since $P_i(\omega)$ is a minimal element, we must have $P_{\ell}(\omega) = P_k(\omega') = P_i(\omega)$. Therefore, $\mathsf{Min} (\omega) \subseteq \mathsf{Min} (\omega')$. The reverse set inclusion can be proved analogously.
\end{proof}

\subsection{Proof of Proposition \ref{Prop:Duality}}\label{Proof:Duality}
The argument is split in four independent parts.
\begin{itemize}
	\item $f$ \textit{is injective}. Let $P \neq P'$. This means that there exists a state $\omega$ such that $P(\omega) \neq P'(\omega)$. The latter is equivalent to having $P'(\omega)\setminus P(\omega) \neq \emptyset$ or $P(\omega)\setminus P'(\omega) \neq \emptyset$. Suppose $P'(\omega)\setminus P(\omega) \neq \emptyset$. Call $K:=f(P)$ and $K':=f(P')$. Take the event $A = P(\omega)$. Then we have $\omega \in K(A)$ but $\omega \notin K'(A)$, which means $K \neq K'$. The case $P(\omega)\setminus P'(\omega) \neq \emptyset$ is analogous.
	\item $f$ \textit{is surjective}. Fix $K \in \mathcal{K}^0$. Let $P$ be the possibility relation formed out of $K$ via \eqref{eq:ReprMSZ}. We are going to show that $P$ represents $K$, i.e., $ K = f(P)$. To do this, take any $A\subseteq \Omega$. Suppose $\omega \in K(A)$. It follows immediately from \eqref{eq:ReprMSZ} that $P(\omega) \subseteq A$. Thus, $K \leq f(P)$. Now suppose that $P(\omega) \subseteq A$ for some $\omega \in \Omega$. Since $\omega \in \Omega = K(\Omega)$ by necessitation, it follows from \eqref{eq:ReprMSZ} that $P(\omega)$ is the intersection of a non-empty family of events $B_1, \dots, B_m$, with $m\geq 1$. Besides, we have that $\omega \in \cap_{i=1}^m K(B_i)$. By distributivity, $\cap_{i=1}^m K(B_i) = K(\cap_{i=1}^m B_i)$. Since $\cap_{i=1}^m B_i = P(\omega) \subseteq A$, it follows from monotonicity \eqref{eq:Monot} that $K(\cap_{i=1}^m B_i) \subseteq K(A)$. Therefore, $\omega \in K(A)$ and, as a consequence, $f(P) \leq K$.
	\item $P \subseteq P' \implies f(P') \leq f(P)$. Suppose $P \subseteq P'$, which is equivalent to $P(\omega) \subseteq P'(\omega)$ for all states $\omega$. Call $K:=f(P)$ and $K':=f(P')$. Thus for all events $A$ we have
	\begin{equation*}
		K'(A) = \{ \omega: P'(\omega) \subseteq A \} \subseteq \{ \omega: P(\omega) \subseteq A \} = K(A).
	\end{equation*}
	\item $f(P') \leq f(P) \implies P \subseteq P'$. Call $K:=f(P)$ and $K':=f(P')$. By way of contradiction, suppose $K' \leq K$ but $P \not\subseteq P'$. The latter is equivalent to $P(\omega)\setminus P'(\omega)\neq \emptyset$ for some $\omega$. Choose $B = P'(\omega)$. Clearly, $\omega \in K'(B)$. Since $P(\omega) \not \subseteq P'(\omega)$, we also have $\omega \not\in K(B)$. This contradicts the assumption that $K'(A) \subseteq K(A)$ for all events $A$.
\end{itemize}

\subsection{Proof of Proposition \ref{Prop:PosIntrProperties}}\label{Proof:PosIntrProperties}
\begin{itemize}
\item[(i)] \textit{Veridicality} of $K^{\bm{+}}$ follows from the veridicality of $K$ and the definition of $K^{\bm{+}}$. To show that $K^{\bm{+}}$ satisfies \textit{positive introspection}, notice that if $A\in \mathsf{Img}(K)$, then $K^{\bm{+}}(A) = A$. Consequently, $K^{\bm{+}}(K^{\bm{+}}(A)) = K^{\bm{+}}(A) = A$. If $A \notin \mathsf{Img}(K)$, then $K^{\bm{+}}(A) = K(A)$. Thus $K^{\bm{+}}(K^{\bm{+}}(A)) = K^{\bm{+}}(K(A)) = K(A)$ since $K(A) \in \mathsf{Img}(K)$.
\item[(ii)] If $K^{\bm{+}} = K$, then part (i) above implies that $K$ satisfies positive introspection. To show the other direction, suppose $K$ satisfies positive introspection. Take any event $A\subseteq \Omega$. If $K(A) = A$, it follows immediately from the definition of $K^{\bm{+}}$ that $K^{\bm{+}}(A) = A$. Now suppose $K(A) \neq A$. There are two cases. If $A \notin \mathsf{Img}(K)$, then $K^{\bm{+}}(A) = K(A)$. In the remaining case, $A \in \mathsf{Img}(K)$. This means that there exists an event $B \neq A$ such that $K(B) = A$. By positive introspection, $K(B) = K(K(B))$. Since $K(B) = A$, we get $K(A) = A$, which contradicts the assumption that $K(A) \neq A$.
\item[(iii)] This follows immediately from the definition of $K^{\bm{+}}$ and the axiom of veridicality.
\item[(iv)] By part (i), the revised knowledge operator $K^{\bm{+}}$ satisfies positive introspection. Thus $K^{\bm{++}}= K^{\bm{+}}$ by part (ii).
\end{itemize}
\qed

\subsection{Proof of Proposition \ref{Prop:FullIntroProperties}}\label{Proof:FullIntrProperties}
\begin{itemize}
\item[(i)] \textit{Veridicality} of $K^{\bm{\pm}}$ follows from the veridicality of $K$ and the definition of $K^{\bm{\pm}}$. As for \textit{necessitation}, it is clear that $K(\emptyset) = \emptyset$ since $K$ satisfies veridicality. Thus $\Omega = \lnot K(\emptyset) \in \mathsf{Img} (\lnot K)$ and $K^{\bm{\pm}}(\Omega) = \Omega$. To show that $K^{\bm{\pm}}$ satisfies \textit{positive introspection}, notice that if $A \in \mathsf{Img}(K)\cup \mathsf{Img}(\lnot K)$, then $K^{\bm{\pm}}(A) = A$. Consequently, $K^{\bm{\pm}}(K^{\bm{\pm}}(A)) = K^{\bm{\pm}}(A) = A$. If $A \notin \mathsf{Img}(K)\cup \mathsf{Img}(\lnot K)$, then $K^{\bm{\pm}}(A) = K(A)$. Therefore, $K^{\bm{\pm}}(K^{\bm{\pm}}(A)) = K^{\bm{\pm}}(K(A)) = K(A)$ since $K(A)\in \mathsf{Img}(K)$. Finally, to show that $K^{\bm{\pm}}$ satisfies \textit{negative introspection}, suppose $A \in \mathsf{Img}(K)\cup \mathsf{Img}(\lnot K)$. Then $K^{\bm{\pm}}(A) = A$ and $\lnot K^{\bm{\pm}}(A) = \lnot A$. Moreover, $K^{\bm{\pm}} (\lnot K^{\bm{\pm}}(A)) = K^{\bm{\pm}}(\lnot A)$. If $A \in \mathsf{Img}(K)$, then $\lnot A \in \mathsf{Img}(\lnot K)$ and $K^{\bm{\pm}}(\lnot A) = \lnot A$. Analogously, if $A \in \mathsf{Img}(\lnot K)$, then $\lnot A \in \mathsf{Img}(K)$ and $K^{\bm{\pm}}(\lnot A) = \lnot A$. Now suppose $A \notin \mathsf{Img}(K)\cup \mathsf{Img}(\lnot K)$. By definition, $K^{\bm{\pm}}(A) = K(A)$ and, consequently, $\lnot K^{\bm{\pm}}(A) = \lnot K(A)$. Furthermore, $K^{\bm{\pm}} (\lnot K^{\bm{\pm}}(A)) = K^{\bm{\pm}} (\lnot K(A)) = \lnot K(A)$, since $\lnot K(A) \in \mathsf{Img}(\lnot K)$.
\item[(ii)] If $K^{\bm{\pm}} = K$, then part (i) above implies that $K$ satisfies negative introspection. To show the other direction, suppose $K$ satisfies negative introspection. Take any $A\subseteq \Omega$. If $K(A) = A$, then $K^{\bm{\pm}}(A) = A$ by the definition of $K^{\bm{\pm}}$. Now suppose $K(A) \neq A$. If $A \notin \mathsf{Img}(K) \cup \mathsf{Img}(\lnot K)$, then $K^{\bm{\pm}}(A) = K(A)$. If $A \in \mathsf{Img}(K) \cup \mathsf{Img}(\lnot K)$, there are two subcases. First, $A \in \mathsf{Img}(\lnot K)$. This means that there exists an event $B$ such that $\lnot K(B) = A$. By negative introspection, $\lnot K(B) = K(\lnot K(B))$. Since $A = \lnot K(B)$, we get $K(A) = A$, which contradicts the assumption $K(A) \neq A$. In the remaining subcase, $A \in \mathsf{Img}(K)$. Hence there is an event $B$ such that $A = K(B)$, which is equivalent to $\lnot A = \lnot K(B)$. By negative introspection, $\lnot K(B) = K(\lnot K(B))$ and, consequently, $\lnot A = K(\lnot A)$, which is equivalent to $A = \lnot K(\lnot A)$. By negative introspection again, $\lnot K(\lnot A) = K(\lnot K(\lnot A))$ and, consequently, $K(A) = A$, so reaching a contradiction.
\item[(iii)] This follows from the definition of $K^{\bm{\pm}}$ and the fact that $K$ satisfies veridicality.
\item[(iv)] By part (i), the revised knowledge operator $K^{\bm{\pm}}$ satisfies negative introspection. Thus $K^{\bm{\pm\pm}} = K^{\bm{\pm}}$ by part (ii).
\end{itemize}
\qed

\subsection{Proof of Proposition \ref{Prop:DistrClosProperties}}\label{Proof:DistrClosProperties}
\begin{itemize}
\item[(i)] \textit{Veridicality} of $K^{\mathbf{d}}$ follows from the veridicality of $K$ and the definition of $K^{\mathbf{d}}$. To show that $K^{\mathbf{d}}$ satisfies \textit{distributivity}, notice that $K^{\mathbf{d}}$ satisfies monotonicity \eqref{eq:Monot}. Hence, $K^{\mathbf{d}}(A \cap B) \subseteq K^{\mathbf{d}}(A) \cap K^{\mathbf{d}}(B)$ for all $A, B\subseteq \Omega$. To show the reverse inclusion, take any $A, B\subseteq \Omega$. The case where $K^{\mathbf{d}}(A) \cap K^{\mathbf{d}}(B) = \emptyset$ is trivial. So suppose $\omega \in K^{\mathbf{d}}(A) \cap K^{\mathbf{d}}(B) \neq \emptyset$. By the definition of $K^{\mathbf{d}}$, there exist events $C_1, \dots, C_n, D_1, \dots, D_m \subseteq \Omega$ such that $\cap_{i=1}^n C_i \subseteq A$, $\cap_{j=1}^m D_j \subseteq B$, and
\begin{align}
\omega &\in \left(\cap_{i=1}^n K(C_i)\right) \cap \left(\cap_{j=1}^m K(D_j)\right)\nonumber\\
&= K(C_1) \cap \cdots \cap K(C_n) \cap K(D_1) \cap \cdots \cap K(D_m). \label{eq:DistrProof}
\end{align}
Since $\cap_{i=1}^n C_i \subseteq A$ and $\cap_{j=1}^m D_j \subseteq B$, we have
\begin{equation*}
C_1 \cap \cdots \cap C_n \cap D_1 \cap \cdots \cap D_m \subseteq A \cap B.
\end{equation*}
Hence it follows from the definition of $K^{\mathbf{d}}$ that the set \eqref{eq:DistrProof} is included in $K^{\mathbf{d}}(A\cap B)$, so proving that $K^{\mathbf{d}}$ satisfies distributivity.
\item[(ii)] If $K^{\mathbf{d}} = K$, then part (i) above implies that $K$ satisfies distributivity. To show the other direction, suppose $K$ satisfies distributivity. Fix $A\subseteq \Omega$ and take any sequence of events $B_1, \dots, B_n$ such that $\cap_{i=1}^n B_i \subseteq A$. Since $K$ satisfies distributivity, and consequently monotonicity, we have
\begin{equation*}
\cap_{i=1}^n K(B_i) = K(\cap_{i=1}^n B_i) \subseteq K(A).
\end{equation*}
This means that $K^{\mathbf{d}} (A)$ is the union of subsets of $K(A)$. Therefore, $K^{\mathbf{d}}(A) \subseteq K(A)$. In addition, it follows easily from the definition of $K^{\mathbf{d}}$ that $K(A) \subseteq K^{\mathbf{d}}(A) $. Thus we can conclude that $K(A) = K^{\mathbf{d}}(A)$ for all $A\subseteq \Omega$.
\item[(iii)] This follows immediately from the definition of $K^{\mathbf{d}}$.
\item[(iv)] The revised knowledge operator $K^{\mathbf{d}}$ satisfies distributivity by part (i). Thus $K^{\mathbf{dd}} = K^{\mathbf{d}}$ by part (ii).
\item[(v)] Suppose $K \leq J$, i.e., $K(A) \subseteq J(A)$ for all $A\subseteq \Omega$. Then $\cap_{i=1}^n K(B_i) \subseteq \cap_{i=1}^n J(B_i)$ for all sequences $B_1, \dots, B_n \subseteq \Omega$. Therefore, $K^{\mathbf{d}} \leq J^{\mathbf{d}}$.
\end{itemize}
\qed

\subsection{Proof of Lemma \ref{Lemma:PreservedProperties}}\label{Proof:PreservedProperties}
\begin{itemize}
\item[(i)] Suppose $K \in \mathcal{K}^v$ satisfies positive introspection. Fix $A\subseteq \Omega$. If $K^{\mathbf{d}}(A)$ is the empty set, it is obvious that $K^{\mathbf{d}}(A) \subseteq K^{\mathbf{d}}(K^{\mathbf{d}}(A))$. Now suppose $K^{\mathbf{d}}(A) \neq \emptyset$ and take $\omega \in K^{\mathbf{d}}(A)$. By the definition of $K^{\mathbf{d}}$, there exist events $B_1, \dots, B_n \subseteq \Omega$ such that $\cap_{i=1}^n B_i \subseteq A$ and $\omega \in \cap_{i=1}^n K(B_i) \subseteq K^{\mathbf{d}}(A)$. By positive introspection and veridicality, we have $\cap_{i=1}^n K(B_i) = \cap_{i=1}^n K(K(B_i))$. By using the fact that $\cap_{i=1}^n K(B_i) \subseteq K^{\mathbf{d}}(A)$ and the definition of $K^{\mathbf{d}}$, we can conclude that
\begin{equation*}
\omega \in  \cap_{i=1}^n K(K(B_i)) \subseteq K^{\mathbf{d}}(K^{\mathbf{d}}(A)).
\end{equation*}

\item[(ii)] Suppose $K \in \mathcal{K}^v$ satisfies negative introspection. Fix $A\subseteq \Omega$. If $\lnot K^{\mathbf{d}}(A)$ is the empty set, then it is obvious that $\lnot K^{\mathbf{d}}(A) \subseteq K^{\mathbf{d}}(\lnot K^{\mathbf{d}}(A))$. Now suppose $\lnot K^{\mathbf{d}}(A)\neq \emptyset$ and take $\omega \in \lnot K^{\mathbf{d}}(A)$. By the definition of $K^{\mathbf{d}}$ and De Morgan's laws, the event $\lnot K^{\mathbf{d}}(A)$ is
\begin{equation}\label{eq:NotKdA}
\lnot K^{\mathbf{d}}(A) = \bigcap \left\lbrace \cup_{i=1}^n \lnot K(B_i): B_1, \dots, B_n \subseteq \Omega, \; \cap_{i=1}^n B_i \subseteq A, \; n \geq 1\right\rbrace.
\end{equation}

As per \eqref{eq:NotKdA}, the event $\lnot K^{\mathbf{d}}(A)$ is the intersection of infinitely many events of the form $\cup_{i=1}^n \lnot K(B_i)$. Since the state space $\Omega$ is finite, there are only finitely many such events that are not equal to one another. Thus $\lnot K^{\mathbf{d}}(A)$ is fully determined by the intersection of finitely many events of the form $\cup_{i=1}^n \lnot K(B_i)$. In addition, recall that set intersection distributes over union and that for any two finite families $\{A_j\}_{j\in J}$ and $\{A_k\}_{k\in K}$ we have
\begin{equation}\label{eq:DistrIntersection}
\left(\cup_{j\in J} A_j\right) \cap \left(\cup_{k\in K} A_k\right) = \bigcup_{(j,k)\in J\times K} (A_j \cap A_k).
\end{equation}

Now, by repeated application of \eqref{eq:DistrIntersection}, the event $\lnot K^{\mathbf{d}}(A)$ can be written as the union of finitely many events of the form $\cap_{i=1}^m \lnot K(C_i)$. Since $\omega \in \lnot K^{\mathbf{d}}(A)$ by assumption, there must exist a sequence $C_1, \dots, C_m \subseteq \Omega$ for which
\begin{equation*}
\omega \in \cap_{i=1}^m \lnot K(C_i) \subseteq \lnot K^{\mathbf{d}}(A).
\end{equation*}

By negative introspection and veridicality, we have $\cap_{i=1}^m \lnot K(C_i) = \cap_{i=1}^m K(\lnot K(C_i))$. By using the fact that $\cap_{i=1}^m \lnot K(C_i) \subseteq \lnot K^{\mathbf{d}}(A)$ and by the definition of $K^{\mathbf{d}}$ we can conclude that
\begin{equation*}
\omega \in \cap_{i=1}^m \lnot K(C_i) = \cap_{i=1}^m K(\lnot K(C_i)) \subseteq K^{\mathbf{d}} (\lnot K^{\mathbf{d}}(A)).
\end{equation*}
\end{itemize}
\qed

\subsection{Proof of Proposition \ref{Prop:FeasRev}}\label{Proof:FeasRev}
The proof relies on the following two claims.

\begin{claim}\label{claim:K2}
Let $K_1, \dots, K_n$ be $n\geq 1$ knowledge operators in  $\mathcal{K}^2$ and let $\mathbf{P} = \left(P_{1}, \dots, P_{n}\right)$ be the profile of possibility relations in $\mathcal{P}^2$ that represent those $n$ knowledge operators. Then we have:
\begin{equation*}
(\cup_{i} K_i)^{\mathbf{d}} = (\cup_{i} K_i)^{\mathbf{d}\bm{+}\mathbf{d}} = (\cup_{i} K_i)^{\bm{+}\mathbf{d}} = (\cup_{i} K_i^{\bm{+}\mathbf{d}})^{\mathbf{d}} \leq (\cup_{i} K_i)^{\mathbf{d}\bm{\pm}\mathbf{d}} = (\cup_{i} K_i)^{\bm{\pm}\mathbf{d}} = (\cup_{i} K_i^{\bm{\pm}\mathbf{d}})^{\mathbf{d}}
\end{equation*}
and, dually,
\begin{equation}\label{eq:K2claim}
\cap_{i} P_{i} = T_{\cap_{i} P_{i}} = T_{\mathbf{P}} = \cap_i T_{P_i} \supseteq E_{\cap_{i} P_{i}} = E_{\mathbf{P}} = \cap_i E_{P_i}.
\end{equation}
\end{claim}
\begin{proof}[Proof of Claim \ref{claim:K2}] By Propositions \ref{Prop:Duality} and \ref{Prop:Representation}, it is enough to prove the dual statement \eqref{eq:K2claim}. Suppose $(\omega, \omega')\in \cap_{i} P_{i}$. Then $(\omega, \omega') \in P_i$ for all $i\in \{1, \dots, n\}$. Since each $P_i$ is transitive by assumption, $(\omega, \omega') \in P_i$ implies $P_i(\omega') \subseteq P_i(\omega)$. Thus $(\omega, \omega') \in \cap_i T_{P_i}$ and $\cap_{i} P_{i} \subseteq \cap_i T_{P_i}$. By Claim \ref{claim:K1}, we can conclude $\cap_{i} P_{i} = T_{\cap_{i} P_{i}} = T_{\mathbf{P}} = \cap_i T_{P_i} \supseteq E_{\cap_{i} P_{i}}$. Since $T_{\cap_{i} P_{i}} = T_{\mathbf{P}} = \cap_i T_{P_i}$, we also have $E_{\cap_{i} P_{i}} = E_{\mathbf{P}} = \cap_i E_{P_i}$, so proving the claim.
\end{proof}

\begin{claim}\label{claim:K3}
Let $K_1, \dots, K_n$ be $n\geq 1$ knowledge operators in  $\mathcal{K}^3$, and let $\mathbf{P} = \left(P_{1}, \dots, P_{n}\right)$ be the profile of possibility relations in $\mathcal{P}^3$ that represent those $n$ knowledge operators. Then we have:
\begin{equation*}
(\cup_{i} K_i)^{\mathbf{d}} = (\cup_{i} K_i)^{\mathbf{d}\bm{+}\mathbf{d}} = (\cup_{i} K_i)^{\bm{+}\mathbf{d}} = (\cup_{i} K_i^{\bm{+}\mathbf{d}})^{\mathbf{d}} = (\cup_{i} K_i)^{\mathbf{d}\bm{\pm}\mathbf{d}} = (\cup_{i} K_i)^{\bm{\pm}\mathbf{d}} = (\cup_{i} K_i^{\bm{\pm}\mathbf{d}})^{\mathbf{d}}
\end{equation*}
and, dually,
\begin{equation}\label{eq:K3claim}
\cap_{i} P_{i} = T_{\cap_{i} P_{i}} = T_{\mathbf{P}} = \cap_i T_{P_i} = E_{\cap_{i} P_{i}} = E_{\mathbf{P}} = \cap_i E_{P_i}.
\end{equation}
\end{claim}
\begin{proof}[Proof of Claim \ref{claim:K3}] As in the previous claim, it is enough to show \eqref{eq:K3claim}. Since each possibility relation is an equivalence relation by assumption, we have $\cap_i P_i \subseteq \cap_i E_{P_i}$. Combining the latter with Claim \ref{claim:K2} yields \eqref{eq:K3claim}.
%
\end{proof}

Now Proposition \ref{Prop:FeasRev} can be proved as follows.

\begin{itemize}
\item[(i)] Suppose agent $i$ is non-introspective. Since the distributive closure is idempotent, we have $\mathbf{K}^{\theta_i} = (\cup_{j} K_j)^{\mathbf{d}}$ when $i$ forms $\mathbf{K}^{\theta_i}$ according to part (iii-a) of Assumption \ref{Assm:Rev}. If $i$ revises $\mathbf{K}$ according to part (iii-b), then $\mathbf{K}^{\theta_i} = (\cup_{j} K_j^{\mathbf{d}})^{\mathbf{d}}$. Since each $K_j$ in $\mathbf{K}$ satisfies distributivity by assumption, we have $(\cup_{j} K_j^{\mathbf{d}})^{\mathbf{d}} = (\cup_{j} K_j)^{\mathbf{d}}$ by Proposition \ref{Prop:DistrClosProperties}.
\item[(ii)] Suppose $i$ is positively introspective. When she revises $\mathbf{K}$ according to part (iii-a) of Assumption \ref{Assm:Rev}, two cases are possible. In the first, knowledge revision starts by applying the positive introspection operator, so obtaining $(\cup_j K_j)^{\bm{+}}$. By Proposition \ref{Prop:PosIntrProperties}, the knowledge operator $(\cup_j K_j)^{\bm{+}}$ satisfies positive introspection; in addition, $(\cup_j K_j)^{\bm{+}}$ cannot be further expanded through the positive introspection operator because the latter is idempotent. However, $(\cup_j K_j)^{\bm{+}}$ may or may not satisfy distributivity. If it does, then $(\cup_j K_j)^{\bm{+}} = (\cup_j K_j)^{\bm{+}\mathbf{d}}$ by Proposition \ref{Prop:DistrClosProperties}; if it does not, then $(\cup_j K_j)^{\bm{+}\mathbf{d}}$ is a strict expansion of $(\cup_j K_j)^{\bm{+}}$. The operator $(\cup_j K_j)^{\bm{+}\mathbf{d}}$ always satisfies distributivity and, by Lemma \ref{Lemma:PreservedProperties}, positive introspection too. Thus one cannot strictly expand $(\cup_j K_j)^{\bm{+}\mathbf{d}}$ through further applications of the distributive closure or of the positive introspection operator. Therefore, we can conclude that $\mathbf{K}^{\theta_i} = (\cup_{j} K_j)^{\bm{+}\mathbf{d}}$. In the second case, knowledge revision starts by applying the distributive closure to $\cup_j K_j$, so obtaining $(\cup_j K_j)^{\mathbf{d}}$. The latter may or may not satisfy positive introspection. If it does, then $(\cup_j K_j)^{\mathbf{d}} = (\cup_{j} K_j)^{\mathbf{d}\bm{+}\mathbf{d}}$ by Propositions \ref{Prop:PosIntrProperties} and \ref{Prop:DistrClosProperties}. If it does not, then $(\cup_j K_j)^{\mathbf{d}\bm{+}}$ is a strict expansion of $(\cup_j K_j)^{\mathbf{d}}$. Now, the knowledge operator $(\cup_j K_j)^{\mathbf{d}\bm{+}}$ does not necessarily satisfy distributivity. If it does, then $(\cup_j K_j)^{\mathbf{d}\bm{+}}= (\cup_{j} K_j)^{\mathbf{d}\bm{+}\mathbf{d}}$ by Proposition \ref{Prop:DistrClosProperties}. Otherwise, $(\cup_j K_j)^{\mathbf{d}\bm{+}}$ is strictly expanded by $(\cup_j K_j)^{\mathbf{d}\bm{+}\mathbf{d}}$. The operator $(\cup_j K_j)^{\mathbf{d}\bm{+}\mathbf{d}}$ always satisfies distributivity and, by Lemma \ref{Lemma:PreservedProperties}, positive introspection too. Thus one cannot strictly expand $(\cup_j K_j)^{\mathbf{d}\bm{+}\mathbf{d}}$ through further applications of the distributive closure or of the positive introspection operator. Therefore, we can conclude that in this case $\mathbf{K}^{\theta_i} = (\cup_{j} K_j)^{\mathbf{d}\bm{+}\mathbf{d}}$.

It remains to show what is $\mathbf{K}^{\theta_i}$ when $i$ revises $\mathbf{K}$ according to part (iii-b) of Assumption \ref{Assm:Rev}. In this case, $i$ applies the distributive closure and the positive introspection operator to each $K_j$. Since each $K_j$ satisfies distributivity by assumption, and since the distributive closure preserves positive introspection by Lemma \ref{Lemma:PreservedProperties}, we have $K_j^{\bm{+}\mathbf{d}} = K_j^{\mathbf{d}\bm{+}\mathbf{d}}$ and no strict expansion of the latter can be made through further applications of $(.)^{\mathbf{d}}$ or $(.)^{\bm{+}}$. Thus the pointwise union of all $n$ revised operators from $\mathbf{K}$ is equal to $\cup_j K_j^{\bm{+}\mathbf{d}}$. The latter operator does not necessarily satisfy distributivity. If it does, then $\cup_j K_j^{\bm{+}\mathbf{d}} = (\cup_{j} K_i^{\bm{+}\mathbf{d}})^{\mathbf{d}}$ by Proposition \ref{Prop:DistrClosProperties}. Otherwise, $\cup_j K_j^{\bm{+}\mathbf{d}}$ is strictly expanded by $(\cup_j K_j^{\bm{+}\mathbf{d}})^{\mathbf{d}}$. Since each $
K_j^{\bm{+}\mathbf{d}}$ belongs to $\mathcal{K}^2$, it follows from Claim \ref{claim:K2} that no strict expansion of $(\cup_{j} K_j^{\bm{+}\mathbf{d}})^{\mathbf{d}}$ can be made through the distributive closure or the positive introspection operator. Therefore, we can conclude that $\mathbf{K}^{\theta_i} = (\cup_{j} K_j^{\bm{+}\mathbf{d}})^{\mathbf{d}}$.
\item[(iii)] One can prove the case of a fully introspective agent by following, \textit{mutatis mutandis}, the argument of part (ii) above. Note that Claim \ref{claim:K3} is the relevant claim for the proof instead of Claim \ref{claim:K2}.
\end{itemize}
\qed

\subsection{Proof of Proposition \ref{Prop:Representation}}\label{Proof:Representation}
The proof relies on the following two lemmata.
\begin{lemma}\label{Lemma:Equiv+}
Let $K_1, \dots, K_n$ be $n\geq 1$ knowledge operators in $\mathcal{K}^1$, and let $\mathbf{P}=\left(P_{1}, \dots, P_{n}\right)$ be the profile of relations in $\mathcal{P}^1$ that represent those $n$ knowledge operators. In addition, let $Q$ be the relation that represents $(\cup_{i=1}^n K_i)^{\bm{+}\mathbf{d}}$. For all $\omega, \omega' \in \Omega$, the following are equivalent:
\begin{itemize}
\item[(i)] $(\omega, \omega') \in T_{\mathbf{P}}$;
\item[(ii)] for all $A \subseteq \Omega$, $\omega \in \cup_{i=1}^n K_i(A) \implies \omega' \in \cup_{i=1}^n K_i(A)$;
\item[(iii)] for all $A \subseteq \Omega$, $\omega \in (\cup_{i=1}^n K_i)^{\bm{+}}(A) \implies \omega' \in (\cup_{i=1}^n K_i)^{\bm{+}}(A)$;
\item[(iv)] for all $A \subseteq \Omega$, $\omega \in (\cup_{i=1}^n K_i)^{\bm{+}\mathbf{d}}(A) \implies \omega' \in (\cup_{i=1}^n K_i)^{\bm{+}\mathbf{d}}(A)$;
\item[(v)] $(\omega, \omega') \in Q$.
\end{itemize}
\end{lemma}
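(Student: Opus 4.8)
The plan is to establish the cycle of implications $(i)\Rightarrow(ii)\Rightarrow(iii)\Rightarrow(iv)\Rightarrow(v)\Rightarrow(i)$. Write $M:=\cup_{i=1}^n K_i$; since each $K_i$ is veridical, $M\in\mathcal{K}^v$, so $M^{\bm{+}}$ and $M^{\bm{+}\mathbf{d}}=(\cup_i K_i)^{\bm{+}\mathbf{d}}$ are well defined, and by Propositions \ref{Prop:PosIntrProperties} and \ref{Prop:DistrClosProperties} together with Lemma \ref{Lemma:PreservedProperties} we have $M^{\bm{+}\mathbf{d}}\in\mathcal{K}^2$, so its representing relation $Q$ is reflexive and transitive.

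The first four implications go ``upward'' along the revision process and are routine. For $(i)\Rightarrow(ii)$: if $\omega\in M(A)$ then $P_i(\omega)\subseteq A$ for some $i$, so by $(i)$ there is a $j$ with $P_j(\omega')\subseteq P_i(\omega)\subseteq A$, hence $\omega'\in K_j(A)\subseteq M(A)$. For $(ii)\Rightarrow(iii)$: if $\omega\in M^{\bm{+}}(A)$ then either $A\notin\mathsf{Img}(M)$, so $M^{\bm{+}}(A)=M(A)$ and $(ii)$ applies directly; or $A=M(B)\in\mathsf{Img}(M)$, so $M^{\bm{+}}(A)=A$, $\omega\in M(B)$, and $(ii)$ gives $\omega'\in M(B)=A=M^{\bm{+}}(A)$. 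For $(iii)\Rightarrow(iv)$: if $\omega\in M^{\bm{+}\mathbf{d}}(A)$, expand the distributive closure to obtain events $B_1,\dots,B_m$ with $\cap_k B_k\subseteq A$ and $\omega\in\cap_k M^{\bm{+}}(B_k)$; applying $(iii)$ to each $B_k$ yields $\omega'\in\cap_k M^{\bm{+}}(B_k)\subseteq M^{\bm{+}\mathbf{d}}(A)$. For $(iv)\Rightarrow(v)$: by \eqref{eq:ReprKbyP} one has $\omega\in M^{\bm{+}\mathbf{d}}(Q(\omega))$, so applying $(iv)$ with $A=Q(\omega)$ gives $Q(\omega')\subseteq Q(\omega)$, and reflexivity of $Q$ then yields $\omega'\in Q(\omega)$.

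The substantive step is $(v)\Rightarrow(i)$, which I would obtain by computing $Q$ explicitly and checking that $Q=T_{\mathbf{P}}$. By \eqref{eq:ReprMSZ}, $Q(\omega)=\bigcap\{A:\omega\in M^{\bm{+}\mathbf{d}}(A)\}$. Since every set $A$ with $\omega\in M^{\bm{+}\mathbf{d}}(A)$ is a superset of a finite intersection of members of $\{C:\omega\in M^{\bm{+}}(C)\}$, and (as $\Omega$ is finite) that family itself furnishes such an $A$, the intersection is unchanged on restricting to it: $Q(\omega)=\bigcap\{C:\omega\in M^{\bm{+}}(C)\}$. The definition of $(.)^{\bm{+}}$ gives $\{C:\omega\in M^{\bm{+}}(C)\}=\{C:\omega\in M(C)\}\cup\{C\in\mathsf{Img}(M):\omega\in C\}$. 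Every member of the first family contains some $P_l(\omega)$, and each $P_l(\omega)$ belongs to it, so its intersection is $\cap_l P_l(\omega)$; every member of the second family is $M(D)$ with $D\supseteq P_l(\omega)$ for some $l$, and each $M(P_l(\omega))$ belongs to it because $\omega\in K_l(P_l(\omega))\subseteq M(P_l(\omega))$, so its intersection is $\cap_l M(P_l(\omega))$. By veridicality $\cap_l M(P_l(\omega))\subseteq\cap_l P_l(\omega)$, whence $Q(\omega)=\cap_l M(P_l(\omega))=\cap_l\big(\cup_{l'}K_{l'}\big)(P_l(\omega))$. Reading off the representation $K_{l'}=f(P_{l'})$, a state $\tau$ lies in this set iff for every $l$ there is an $l'$ with $P_{l'}(\tau)\subseteq P_l(\omega)$, i.e.\ iff $(\omega,\tau)\in T_{\mathbf{P}}$. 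Hence $Q=T_{\mathbf{P}}$, which gives $(v)\Rightarrow(i)$ and closes the cycle.

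I expect the identification $Q(\omega)=\bigcap\{C:\omega\in M^{\bm{+}}(C)\}=\cap_l M(P_l(\omega))$ to be the main obstacle: one must argue carefully that the distributive closure does not enlarge the relevant intersection, that splitting $\{C:\omega\in M^{\bm{+}}(C)\}$ by membership in $\mathsf{Img}(M)$ and then passing to the smallest relevant sets is legitimate, and that the veridicality inclusion collapses the two pieces into $\cap_l M(P_l(\omega))$. Everything else is bookkeeping with monotonicity, veridicality, and the definitions of the revision operators.
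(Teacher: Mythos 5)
Your proposal is correct, and its forward chain $(i)\Rightarrow(ii)\Rightarrow(iii)\Rightarrow(iv)\Rightarrow(v)$ matches the paper's arguments for those directions almost step for step. Where you genuinely diverge is in closing the cycle: the paper proves each adjacent pair as a two-way equivalence, so the ``backward'' half consists of four short local reversals --- $(v)\Rightarrow(iv)$ from transitivity of $Q$, $(iv)\Rightarrow(iii)$ from extensivity of $(.)^{\mathbf{d}}$ plus veridicality (so that $(\cup_i K_i)^{\bm{+}}(A)$ is a fixed point of $(\cup_i K_i)^{\bm{+}\mathbf{d}}$), $(iii)\Rightarrow(ii)$ from the fact that $\cup_i K_i(A)\in\mathsf{Fix}((\cup_i K_i)^{\bm{+}})$, and $(ii)\Rightarrow(i)$ by evaluating at the events $P_i(\omega)$. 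You instead jump directly from $(v)$ to $(i)$ by computing $Q(\omega)=\bigcap\{C:\omega\in(\cup_i K_i)^{\bm{+}}(C)\}=\cap_l\,(\cup_i K_i)(P_l(\omega))=T_{\mathbf{P}}(\omega)$; I checked the three delicate points you flag (the distributive closure not shrinking the intersection, the split of $\{C:\omega\in(\cup_i K_i)^{\bm{+}}(C)\}$ by membership in $\mathsf{Img}(\cup_i K_i)$ --- which needs veridicality for the ``$\supseteq$'' direction --- and the collapse via $M(P_l(\omega))\subseteq P_l(\omega)$), and they all go through, using only monotonicity and veridicality of $\cup_i K_i$. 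The trade-off: the paper's route is lighter on bookkeeping and reuses the same template for Lemma \ref{Lemma:Equiv+and-}, while yours buys an explicit closed form for $Q$ that proves $Q=T_{\mathbf{P}}$ outright, i.e., it delivers part (ii) of Proposition \ref{Prop:Representation} directly rather than as a corollary of the lemma, and it explains \emph{why} the left $n$-ary trace is the right object. One presentational caveat: as written, your argument establishes only the cyclic implications, so if you ever need an individual reverse implication such as $(iv)\Rightarrow(iii)$ in isolation (the paper does not, but a reader might), it is available only by going around the cycle.
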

\begin{proof}[Proof of Lemma \ref{Lemma:Equiv+}]
\begin{itemize}
\item $\bm{(i) \implies (ii).}$ Suppose $(\omega, \omega') \in T_{\mathbf{P}}$. If $\omega \in \cup_{i=1}^n K_i(A)$ for some $A \subseteq \Omega$, then $\omega \in K_h(A)$ for some $h\in \{1, \dots, n\}$. This means that $P_{h}(\omega) \subseteq A$. By the definition of the left $n$-ary trace, there exists a $j \in \{1, \dots, n\}$ such that $P_{j}(\omega') \subseteq P_{h}(\omega)$. Therefore, $\omega' \in K_j(A) \subseteq \cup_{i=1}^n K_i(A)$.
\item $\bm{(ii) \implies (i).}$ Suppose $(ii)$ holds. It is clear that, for all $i\in \{1, \dots, n\}$, $\omega \in K_i(P_{i}(\omega)) \subseteq \cup_{j=1}^n K_j(P_{i}(\omega))$. It follows from $(ii)$ that $\omega' \in \cup_{j=1}^n K_j(P_{i}(\omega))$ too. The latter implies that $\omega' \in K_h(P_{i}(\omega))$ for some $h \in \{1, \dots, n\}$, which is equivalent to $P_{h}(\omega') \subseteq P_{i}(\omega)$. Thus $(\omega, \omega') \in T_{\mathbf{P}}$.
\item $\bm{(ii) \implies (iii).}$ This follows from the definition of $(\cup_{i=1}^n K_i)^{\bm{+}}$.
\item $\bm{(iii) \implies (ii).}$ This follows from the fact that for all $A\subseteq \Omega$, the event $\cup_{i=1}^n K_i(A)$ is a fixed point of $(\cup_{i=1}^n K_i)^{\bm{+}}$.
\item $\bm{(iii) \implies (iv).}$ This follows from the definition of $(\cup_{i=1}^n K_i)^{\bm{+}\mathbf{d}}$.
\item $\bm{(iv) \implies (iii).}$ Suppose $(iv)$ holds. For all $A\subseteq \Omega$, the event $(\cup_{i=1}^n K_i)^{\bm{+}}(A)$ is a fixed point of $(\cup_{i=1}^n K_i)^{\bm{+}}$. In addition, since $(.)^{\mathbf{d}}$ is extensive, and since $(\cup_{i=1}^n K_i)^{\bm{+}\mathbf{d}}$ satisfies veridicality, we can conclude that
\begin{equation*}
\left(\cup_{i=1}^n K_i\right)^{\bm{+}}(A) = \left(\cup_{i=1}^n K_i\right)^{\bm{+}\mathbf{d}}\left((\cup_{i=1}^n K_i)^{\bm{+}}(A)\right),
\end{equation*}
from which $(iii)$ follows.
\item $\bm{(iv) \implies (v).}$ Suppose $(iv)$ holds. Since $Q$ represents $\left(\cup_{i=1}^n K_i\right)^{\bm{+}\mathbf{d}}$ by assumption,
\begin{equation*}
Q(\omega) = \bigcap\{A: \omega \in (\cup_{i=1}^n K_i)^{\bm{+}\mathbf{d}}(A)\} \supseteq \bigcap\{A: \omega' \in (\cup_{i=1}^n K_i)^{\bm{+}\mathbf{d}}(A)\} = Q(\omega').
\end{equation*}
Since $Q\in \mathcal{P}^2$, $Q$ is reflexive. Thus we can conclude that $\omega' \in Q(\omega)$, i.e., $(\omega, \omega')\in Q$.
\item $\bm{(v) \implies (iv).}$ Suppose $(\omega, \omega')\in Q$. Since $Q$ is transitive, $\omega' \in Q(\omega)$ implies $Q(\omega')\subseteq Q(\omega)$, which in turn implies $(iv)$.
\end{itemize}
\end{proof}

\begin{lemma}\label{Lemma:Equiv+and-}
Let $K_1, \dots, K_n$ be $n\geq 1$ knowledge operators in $\mathcal{K}^1$, and let $\mathbf{P}=\left(P_{1}, \dots, P_{n}\right)$ be the profile of relations in $\mathcal{P}^1$ that represent those $n$ knowledge operators. In addition, let $Q$ be the relation that represents $(\cup_{i=1}^n K_i)^{\bm{\pm}\mathbf{d}}$. For all $\omega, \omega' \in \Omega$, the following are equivalent:
\begin{itemize}
\item[(i)] $(\omega, \omega') \in E_{\mathbf{P}}$;
\item[(ii)] for all $A \subseteq \Omega$, $\omega \in \cup_{i=1}^n K_i(A) \iff \omega' \in \cup_{i=1}^n K_i(A)$;
\item[(iii)] for all $A \subseteq \Omega$, $\omega \in (\cup_{i=1}^n K_i)^{\bm{\pm}}(A) \iff \omega' \in (\cup_{i=1}^n K_i)^{\bm{\pm}}(A)$;
\item[(iv)] for all $A \subseteq \Omega$, $\omega \in (\cup_{i=1}^n K_i)^{\bm{\pm}\mathbf{d}}(A) \iff \omega' \in (\cup_{i=1}^n K_i)^{\bm{\pm}\mathbf{d}}(A)$;
\item[(v)] $(\omega, \omega') \in Q$.
\end{itemize}
\end{lemma}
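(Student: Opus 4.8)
The plan is to prove the chain of pairwise equivalences $(i)\Leftrightarrow(ii)\Leftrightarrow(iii)\Leftrightarrow(iv)\Leftrightarrow(v)$, following closely the template of Lemma~\ref{Lemma:Equiv+} but with three systematic changes: every one-directional implication becomes a biconditional, $T_{\mathbf{P}}$ is replaced by its symmetric part $E_{\mathbf{P}}$, and the positive introspection operator $(.)^{\bm{+}}$ is replaced by the full introspection operator $(.)^{\bm{\pm}}$. The only genuine structural difference is that the fixed points of $(\cup_i K_i)^{\bm{+}}$ form $\mathsf{Img}(\cup_i K_i)$, whereas the fixed points of $(\cup_i K_i)^{\bm{\pm}}$ form the larger set $\mathsf{Img}(\cup_i K_i)\cup\mathsf{Img}(\lnot\cup_i K_i)$; tracking the extra $\mathsf{Img}(\lnot\cup_i K_i)$ part is precisely what forces (ii) to be stated as a biconditional.

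For $(i)\Leftrightarrow(ii)$ I would invoke Lemma~\ref{Lemma:Equiv+} directly. Since $E_{\mathbf{P}}$ is the symmetric part of $T_{\mathbf{P}}$, we have $(\omega,\omega')\in E_{\mathbf{P}}$ iff $(\omega,\omega')\in T_{\mathbf{P}}$ and $(\omega',\omega)\in T_{\mathbf{P}}$. Applying the equivalence $(i)\Leftrightarrow(ii)$ of Lemma~\ref{Lemma:Equiv+} to each of these two pairs gives two universally quantified implications, and merging ``for all $A$, $P(A)$'' with ``for all $A$, $Q(A)$'' into ``for all $A$, $P(A)$ and $Q(A)$'' produces exactly condition (ii) of the present lemma. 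For $(ii)\Leftrightarrow(iii)$ the argument is by cases on the event $A$. If $A\notin\mathsf{Img}(\cup_i K_i)\cup\mathsf{Img}(\lnot\cup_i K_i)$, then $(\cup_i K_i)^{\bm{\pm}}(A)=\cup_i K_i(A)$ and (ii) applies verbatim. If $A\in\mathsf{Img}(\cup_i K_i)$, then $(\cup_i K_i)^{\bm{\pm}}(A)=A=\cup_i K_i(B)$ for some $B$, and (ii) applied to $B$ yields the biconditional for $A$. If $A\in\mathsf{Img}(\lnot\cup_i K_i)$, then $(\cup_i K_i)^{\bm{\pm}}(A)=A=\lnot\cup_i K_i(B)$ for some $B$, and the \emph{contrapositive} of (ii) applied to $B$ yields the biconditional for $A$; this last case is exactly where the biconditional form of (ii) is indispensable, and it is the one subtlety absent from Lemma~\ref{Lemma:Equiv+}. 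For the converse direction, I would note that $\cup_i K_i(A)$ always lies in $\mathsf{Img}(\cup_i K_i)$ and hence is a fixed point of $(\cup_i K_i)^{\bm{\pm}}$, so applying (iii) to the event $\cup_i K_i(A)$ immediately recovers (ii).

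The remaining steps are routine adaptations. For $(iii)\Rightarrow(iv)$ I would use the explicit formula for the distributive closure: if $\omega$ and $\omega'$ lie in the same events $(\cup_i K_i)^{\bm{\pm}}(B_j)$, they lie in the same finite intersections of such events, hence in the same union, i.e.\ in the same $(\cup_i K_i)^{\bm{\pm}\mathbf{d}}(A)$. For $(iv)\Rightarrow(iii)$, observe that $(\cup_i K_i)^{\bm{\pm}}(A)$ is a fixed point of $(\cup_i K_i)^{\bm{\pm}}$ by positive introspection and veridicality of that operator, and therefore, by extensivity of $(.)^{\mathbf{d}}$ together with veridicality of $(\cup_i K_i)^{\bm{\pm}\mathbf{d}}$, it is also a fixed point of $(\cup_i K_i)^{\bm{\pm}\mathbf{d}}$; applying (iv) to this event recovers (iii). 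For $(iv)\Leftrightarrow(v)$, I would use that $(\cup_i K_i)^{\bm{\pm}\mathbf{d}}\in\mathcal{K}^3$, so that the representing relation $Q$ lies in $\mathcal{P}^3$ and is thus an equivalence relation; combining the representation $Q(\omega)=\bigcap\{A:\omega\in(\cup_i K_i)^{\bm{\pm}\mathbf{d}}(A)\}$ with reflexivity (for $(iv)\Rightarrow(v)$) and with transitivity and symmetry (for $(v)\Rightarrow(iv)$) closes the cycle. I expect the only real obstacle to be the case analysis in $(ii)\Leftrightarrow(iii)$, specifically making sure the $\mathsf{Img}(\lnot\cup_i K_i)$ portion of the fixed-point set is handled via the contrapositive of (ii); all other steps transpose directly from Lemma~\ref{Lemma:Equiv+} once ``$\implies$'' is upgraded to ``$\iff$''.
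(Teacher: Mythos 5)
Your proposal is correct and follows essentially the same route as the paper's proof: derive $(i)\Leftrightarrow(ii)$ from the corresponding equivalence in Lemma \ref{Lemma:Equiv+} via the symmetric part, handle $(ii)\Leftrightarrow(iii)$ through the fixed-point structure of $(\cup_i K_i)^{\bm{\pm}}$, and close the cycle with the extensivity/veridicality argument for $(iv)\Leftrightarrow(iii)$ and the equivalence-relation properties of $Q$ for $(iv)\Leftrightarrow(v)$. Your explicit case analysis for $A\in\mathsf{Img}(\lnot\cup_i K_i)$ via the contrapositive of (ii) is a correct and slightly more detailed rendering of what the paper compresses into ``follows from the definition of $(\cup_{i=1}^n K_i)^{\bm{\pm}}$.''
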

\begin{proof}[Proof of Lemma \ref{Lemma:Equiv+and-}]
\begin{itemize}
\item $\bm{(i) \iff (ii).}$ This follows immediately from the equivalence between (i) and (ii) in Lemma \ref{Lemma:Equiv+}.
\item $\bm{(ii) \implies (iii).}$ This follows from the definition of $(\cup_{i=1}^n K_i)^{\bm{\pm}}$.
\item $\bm{(iii) \implies (ii).}$ This follows from the fact that, for all $A\subseteq \Omega$, the event $\cup_{i=1}^n K_i(A)$ is a fixed point of $(\cup_{i=1}^n K_i)^{\bm{\pm}}$.
\item $\bm{(iii) \implies (iv).}$ This follows from the definition of $(\cup_{i=1}^n K_i)^{\bm{\pm}\mathbf{d}}$.
\item $\bm{(iv) \implies (iii).}$ Suppose $(iv)$ holds. For all $A\subseteq \Omega$, the event $(\cup_{i=1}^n K_i)^{\bm{\pm}}(A)$ is a fixed point of $(\cup_{i=1}^n K_i)^{\bm{\pm}}$. In addition, since $(.)^{\mathbf{d}}$ is extensive, and since $(\cup_{i=1}^n K_i)^{\bm{\pm}\mathbf{d}}$ satisfies veridicality, we can conclude that
\begin{equation*}
\left(\cup_{i=1}^n K_i\right)^{\bm{\pm}}(A) = \left(\cup_{i=1}^n K_i\right)^{\bm{\pm}\mathbf{d}}\left((\cup_{i=1}^n K_i)^{\bm{\pm}}(A)\right),
\end{equation*}
from which $(iii)$ follows.
\item $\bm{(iv) \implies (v).}$ Suppose $(iv)$ holds. Since $Q$ represents $\left(\cup_{i=1}^n K_i\right)^{\bm{\pm}\mathbf{d}}$ by assumption,
\begin{equation*}
Q(\omega) = \bigcap\{A: \omega \in (\cup_{i=1}^n K_i)^{\bm{\pm}\mathbf{d}}(A)\} = \bigcap\{A: \omega' \in (\cup_{i=1}^n K_i)^{\bm{\pm}\mathbf{d}}(A)\} = Q(\omega').
\end{equation*}
Since $Q\in \mathcal{P}^3$, $Q$ is reflexive. Thus we can conclude that $\omega' \in Q(\omega)$, i.e., $(\omega, \omega')\in Q$.
\item $\bm{(v) \implies (iv).}$ Suppose $(\omega, \omega')\in Q$. Since $Q$ is an equivalence relation, $\omega' \in Q(\omega)$ implies $Q(\omega') = Q(\omega)$, which in turn implies $(iv)$.
\end{itemize}
\end{proof}

Now Proposition \ref{Prop:Representation} can be proved as follows.

\begin{itemize}
\item[(i)] It is clear that $(\cup_{i=1}^n K_i)^{\mathbf{d}} \in \mathcal{K}^1$ and $\cap_{i=1}^n P_{i} \in \mathcal{P}^1$. Call $L$ the knowledge operator represented by $\cap_{i=1}^n P_{i}$. By \eqref{eq:ReprMSZ}, for all $A\subseteq \Omega$,
\begin{equation*}
L(A) = \left\lbrace \omega: \cap_{i=1}^n P_{i}(\omega) \subseteq A\right\rbrace.
\end{equation*}
We need to show that $(\cup_{i=1}^n K_i)^{\mathbf{d}} = L$. Take any $A \subseteq \Omega$. First we show that $(\cup_{i=1}^n K_i)^{\mathbf{d}}(A) \subseteq L(A)$. Suppose $\omega \in (\cup_{i=1}^n K_i)^{\mathbf{d}}(A)$. This means that there exist events $B_1, \dots, B_m$ such that $\cap_{j=1}^m B_j \subseteq A$ and
\begin{equation}\label{eq:RepProof1}
\omega \in \bigcap_{j = 1}^m \left(\cup_{i=1}^n K_i(B_j)\right).
\end{equation}
Since each $P_{i}$ represents $K_i$, \eqref{eq:RepProof1} is equivalent to
\begin{equation*}
\omega \in \bigcap_{j = 1}^m \left(\cup_{i=1}^n \left\lbrace \omega' : P_{i} (\omega') \subseteq B_j\right\rbrace\right),
\end{equation*}
from which we see that, for all $j \in \{1, \dots, m\}$, there exists a $i \in \{1, \dots, n\}$ such that $P_{i}(\omega) \subseteq B_j$. Therefore, 
\begin{equation*}
\cap_{i=1}^n P_{i}(\omega) \subseteq \cap_{j=1}^m B_j \subseteq A,
\end{equation*}
so proving that $\omega \in L(A)$.\\
To show the reverse set inclusion, suppose $\omega \in L(A)$, so that $\cap_{i=1}^n P_{i}(\omega) \subseteq A$. It is clear that $\omega \in K_i(P_{i}(\omega))$ for all $i \in \{1, \dots, n\}$. Therefore, we have
\begin{equation*}
\omega \in \cap_{i = 1}^n K_i(P_{i}(\omega)) \subseteq \bigcap_{i=1}^n \left(\cup_{j=1}^n K_j(P_{i}(\omega))\right) \subseteq (\cup_{i=1}^n K_i)^{\mathbf{d}} (A).
\end{equation*}
\item[(ii)] The knowledge operator $(\cup_{i=1}^n K_i)^{\bm{+}\mathbf{d}}$ belongs to $\mathcal{K}^2$ and is represented by the binary relation $Q \in \mathcal{P}^2$ defined by \eqref{eq:ReprMSZ}. By Lemma \ref{Lemma:Equiv+}, $(\omega, \omega')\in Q$ if and only if $(\omega, \omega')\in T_{\mathbf{P}}$ for all $\omega, \omega' \in \Omega$. Thus $Q = T_{\mathbf{P}}$.
\item[(iii)] It is clear that $(\cup_{i=1}^n K_i)^{\mathbf{d}} \in \mathcal{K}^1$. By part (i) of Proposition \ref{Prop:Representation}, $(\cup_{i=1}^n K_i)^{\mathbf{d}}$ is represented by $\cap_{i=1}^n P_{i}$. Then the result follows immediately from part (ii) of the same proposition.
\item[(iv)] The knowledge operators $K_1^{\bm{+}\mathbf{d}}, \dots, K_n^{\bm{+}\mathbf{d}}$ belong to $\mathcal{K}^2$. By part (ii) of Proposition \ref{Prop:Representation}, each $K_i^{\bm{+}\mathbf{d}}$ is represented by $T_{P_i}$, i.e., the left trace of $P_i$. By part (i) of the same proposition, the operator $(\cup_i K_i^{\bm{+}\mathbf{d}})^{\mathbf{d}}$ is represented by $\cap_i T_{P_i}$.
\item[(v)] The knowledge operator $(\cup_{i=1}^n K_i)^{\bm{\pm}\mathbf{d}}$ belongs to $\mathcal{K}^3$ and is represented by the binary relation $Q \in \mathcal{P}^3$ defined by \eqref{eq:ReprMSZ}. By Lemma \ref{Lemma:Equiv+and-}, $(\omega, \omega')\in Q$ if and only if $(\omega, \omega')\in E_{\mathbf{P}}$ for all $\omega, \omega' \in \Omega$. Thus $Q = E_{\mathbf{P}}$.
\item[(vi)] It is clear that $(\cup_{i=1}^n K_i)^{\mathbf{d}} \in \mathcal{K}^1$. By part (i), $(\cup_{i=1}^n K_i)^{\mathbf{d}}$ is represented by $\cap_{i=1}^n P_{i}$. Then the result follows immediately from part (iv) of Proposition \ref{Prop:Representation}.
\item[(vii)] The knowledge operators $K_1^{\bm{\pm}\mathbf{d}}, \dots, K_n^{\bm{\pm}\mathbf{d}}$ belong to $\mathcal{K}^3$. By part (v) of Proposition \ref{Prop:Representation}, each $K_i^{\bm{\pm}\mathbf{d}}$ is represented by $E_{P_i}$, i.e., the symmetric part of the left trace of $P_i$. By part (i) of the same proposition, the operator $(\cup_i K_i^{\bm{\pm}\mathbf{d}})^{\mathbf{d}}$ is represented by $\cap_i E_{P_i}$.
\end{itemize}
\qed

\subsection{Proof of Claim \ref{claim:K1}}\label{Proof:K1}
It is enough to show the dual statements (i'), (ii') and (iii'). It is immediate that $\cap_{i} P_{i} \supseteq T_{\cap_{i} P_{i}} \supseteq E_{\cap_{i} P_{i}}$. It is also immediate that $T_{\mathbf{P}} \supseteq E_{\mathbf{P}}$ and $\cap_i T_{P_i} \supseteq \cap_i E_{P_i}$. To show $T_{\cap_{i} P_{i}} \supseteq T_{\mathbf{P}}$, notice that $(\omega, \omega')\in T_{\mathbf{P}}$ implies $\cap_j P_{j}(\omega') \subseteq P_{i}(\omega)$ for all $i \in \{1,\dots, n\}$. Therefore, $\cap_j P_{j}(\omega') \subseteq \cap_i P_{i}(\omega)$ and $(\omega, \omega') \in T_{\cap_{i} P_{i}}$. Furthermore, $E_{\cap_{i} P_{i}} \supseteq E_{\mathbf{P}}$ follows easily from $T_{\cap_{i} P_{i}} \supseteq T_{\mathbf{P}}$. To show $T_{\mathbf{P}} \supseteq \cap_i T_{P_i}$, notice that if $(\omega, \omega')\in \cap_i T_{P_i}$, then $P_i(\omega') \subseteq P_i(\omega)$ for all $i\in \{1,\dots, n\}$, from which it follows that $(\omega, \omega') \in T_{\mathbf{P}}$. It remains to show that $E_{\mathbf{P}} \supseteq \cap_i E_{P_i}$. If $(\omega, \omega') \in \cap_i E_{P_i}$, then $P_i(\omega') = P_i(\omega)$ for all $i \in \{1,\dots, n\}$. As a consequence, $\mathsf{Min} (\omega) = \mathsf{Min}(\omega')$ and $(\omega, \omega') \in E_{\mathbf{P}}$. \qed

\newpage

\bibliographystyle{plainnat}
\bibliography{biblio}

\end{document}